\newtheorem{theorem}{Theorem}[section]
\newtheorem{proposition}[theorem]{Proposition}
\newtheorem{lemma}[theorem]{Lemma}
\newtheorem{corollary}[theorem]{Corollary}
\newtheorem{definition}[theorem]{Definition}
\newtheorem{remark}[theorem]{Remark}
\theoremstyle{remark}
\newcommand{\cC}{\mathcal{C}}
\newcommand{\cN}{\mathcal{N}}
\newcommand{\cP}{\mathcal{P}}
\newcommand{\bC}{\mathbb{C}}
\newcommand{\bF}{\mathbb{F}}
\newcommand{\bN}{\mathbb{N}}
\newcommand{\bR}{\mathbb{R}}
\newcommand{\1}{\mathds{1}}
\newcommand{\Enc}{\operatorname{Enc}}
\newcommand{\Dec}{\operatorname{Dec}}
\newcommand{\evl}{\operatorname{ev}}
\newcommand{\CSS}{\operatorname{CSS}}
\newcommand{\MF}[1]{\underline{#1}}
\newcommand{\LAT}[1]{\overline{#1}}
\newcommand{\nc}{\newcommand}
\nc{\on}{\operatorname}
\nc{\Spec}{\on{Spec}}
\nc{\Aut}{\textit{Aut}}
\nc{\id}{\textit{id}}
\nc{\chr}{\on{char}}
\nc{\im}{\on{im}}
\nc{\Hom}{\on{Hom}}
\nc{\lcm}{\on{lcm}}
\nc{\dual}[1]{\prescript{t}{}{#1}}
\nc{\transpose}[1]{{#1}^{\intercal}}
\nc{\Sym}{\on{Sym}}
\nc{\End}{\on{End}}
\nc{\stab}{\on{stab}}
\nc{\Li}{\on{Li}}
\nc{\spn}{\on{span}}
\nc{\sgn}{\on{sgn}}
\nc{\supp}{\on{supp}}
\nc{\Unif}{\on{Unif}}
\newcommand\footnoteref[1]{\protected@xdef\@thefnmark{\ref{#1}}\@footnotemark}
\title{
  Asymptotically Good Quantum Codes \\ with Transversal Non-Clifford Gates\thanks{Research supported in part by a ONR grant N00014-24-1-2491 and a UC Noyce initiative award. L.~Golowich is supported by a National Science Foundation Graduate Research Fellowship under Grant No.~DGE 2146752.}
}
\author{Louis Golowich \\
  UC Berkeley \\
  \href{mailto:lgolowich@berkeley.edu}{\texttt{lgolowich@berkeley.edu}}
  \and
  Venkatesan Guruswami \\
  UC Berkeley \\
  \href{mailto:venkatg@berkeley.edu}{\texttt{venkatg@berkeley.edu}}
}
\date{August 2024}
\begin{document}


\maketitle
\thispagestyle{empty}

\begin{abstract}
  We construct quantum codes that support transversal $CCZ$ gates over qudits of arbitrary prime power dimension $q$ (including $q=2$) such that the code dimension and distance grow linearly in the block length. The only previously known construction with such linear dimension and distance required a growing alphabet size $q$ (Krishna \& Tillich, 2019). Our codes imply protocols for magic state distillation with overhead exponent $\gamma=\log(n/k)/\log(d)\rightarrow 0$ as the block length $n\rightarrow\infty$, where $k$ and $d$ denote the code dimension and distance respectively. It was previously an open question to obtain such a protocol with a contant alphabet size $q$.

  We construct our codes by combining two modular components, namely,
  \begin{enumerate}[label=(\roman*)]
  \item\label{it:ctoq} a transformation from classical codes satisfying certain properties to quantum codes supporting transversal $CCZ$ gates, and
  \item\label{it:concat} a concatenation scheme for reducing the alphabet size of codes supporting transversal $CCZ$ gates. For this scheme we introduce a quantum analogue of \textit{multiplication-friendly codes}, which provide a way to express multiplication over a field in terms of a subfield.
  \end{enumerate}
  We obtain our asymptotically good construction by instantiating \ref{it:ctoq} with algebraic-geometric codes, and applying a constant number of iterations of~\ref{it:concat}. We also give an alternative construction with nearly asymptotically good parameters ($k,d=n/2^{O(\log^*n)}$) by instantiating \ref{it:ctoq} with Reed-Solomon codes and then performing a superconstant number of iterations of \ref{it:concat}.
\end{abstract}





\section{Introduction}
A major challenge in fault-tolerant quantum computation is to efficiently perform non-Clifford gates. The Clifford gate set, generated by the Hadamard, phase, and $CNOT$ gates over qubits, can be efficiently simulated by a classical computer, and has various known fault-tolerant implementations. Universal quantum computation can then be obtained by adding a single non-Clifford gate such as $CCZ$ or $T$. However, such non-Clifford gates are also typically more difficult to implement fault-tolerantly.

A prominent tool for addressing this challenge is given by codes supporting \textit{transversal} non-Clifford gates, meaning that the logical action of the gate on the encoded information can be induced by applying the gate to disjoint sets of physical code qubits (see Section~\ref{sec:transgates} for the formal definition we use in this paper). For instance, codes supporting such transversal non-Clifford gates can be used to distill high-fidelity copies of certain ``magic'' states, which when combined with fault-tolerant Clifford circuits yield protocols for universal fault-tolerant quantum computation \cite{bravyi_universal_2005}.

Our main result, stated below, is a construction of quantum codes supporting transversal $CCZ$ gates with asymptotically good parameters, meaning that the code dimension and distance grow linearly with the block length. Recall here that for qudits of dimension some prime power $q$, the $CCZ$ gate is the 3-qudit gate given by\footnote{For non-prime $q$, we will need a slightly more general definition of gates $CCZ_q^a$ and $U_q^a$ for every $a\in\bF_q$; see Definition~\ref{def:gates}.}
\begin{equation*}
  CCZ_q = \sum_{j_1,j_2,j_3\in\bF_q}e^{2\pi i\tr_{\bF_q/\bF_p}(j_1j_2j_3)/p}\ket{j_1,j_2,j_3}\bra{j_1,j_2,j_3},
\end{equation*}
We will also make use of the single-qudit $U$ gate given by
\begin{equation*}
  U_q = \sum_{j\in\bF_q}e^{2\pi i\tr_{\bF_q/\bF_p}(j^3)/p}\ket{j}\bra{j}.
\end{equation*}
For primes $q\geq 5$, the $U_q$ gate yields universal quantum computation when combined with the Clifford gates \cite{cui_diagonal_2017}, and hence provides a $q$-ary analogue of the $T$ gate.

\begin{theorem}[Informal statement of Theorem~\ref{thm:constalph}]
  \label{thm:constalphinf}
  For every fixed prime power $q$, there exists an infinite family of quantum codes that support a transversal $CCZ$ gate over $q$-dimensional qudits, such that the code dimension $k=\Theta(n)$ and distance $d=\Theta(n)$ grow linearly in the block length $n$. Furthermore, if $q\geq 5$, these codes also support a transversal $U$ gate.
\end{theorem}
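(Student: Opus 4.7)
The plan is to follow the two-stage modular strategy already advertised in the introduction. First I would instantiate the transformation~\ref{it:ctoq} from classical codes to quantum codes supporting transversal $CCZ$ using a family of algebraic-geometric (AG) codes over an appropriately chosen constant alphabet $q_0 = q^m$. AG codes are the natural choice here because (a) over a large enough constant alphabet they attain linear dimension and distance via the Tsfasman--Vladut--Zink bound, (b) they are linear and their duals are again AG codes, so the CSS-type orthogonality required by~\ref{it:ctoq} can be arranged by choosing the degrees of the defining divisors appropriately, and (c) their componentwise products satisfy a containment of the form $C \cdot C \cdot C \subseteq C'$ for a companion AG code $C'$ of larger divisor degree, which is precisely the triple-product condition that forces $CCZ$ to act as a logical transversal operation at the level of the CSS construction. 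Combining these three properties with~\ref{it:ctoq} yields an initial quantum code over $q_0$-dimensional qudits with $k, d = \Theta(n)$ that supports transversal $CCZ$.

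Next I would apply the alphabet-reduction concatenation~\ref{it:concat} a constant number of times to drop the alphabet from $q_0 = q^m$ down to $q$. At each round, the outer code over $\bF_{q_0}$ is concatenated with an inner quantum multiplication-friendly code that encodes each $\bF_{q_0}$-qudit into a block of $\bF_q$-qudits, and whose multiplicative structure is designed so that applying transversal $CCZ$ on the $\bF_q$ inner blocks realises a transversal $CCZ$ on the $\bF_{q_0}$ outer code. Because $m$ is fixed, only finitely many rounds are needed; each round multiplies the rate and relative distance by positive constants depending only on the inner code, so linear $k$ and $d$ are preserved. The transversal $U$ claim for $q \geq 5$ should come along for free: the orthogonality and triple-product containment hypotheses that make $CCZ$ transversal in the CSS setting also make the diagonal gate $U$ transversal, since $U$ is the diagonal specialisation of $CCZ$ on a single qudit.

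The main obstacle — and the reason the quantum version is nontrivial — is the design of the inner code used in~\ref{it:concat}. It must simultaneously be a valid CSS code supporting transversal $CCZ$ over $\bF_q$ \emph{and} express the extension $\bF_{q_0}/\bF_q$ at the level of triple products, i.e.\ encoding $(x, y, z) \in \bF_{q_0}^3$ into $\bF_q$-qudit blocks in such a way that componentwise $CCZ$ on the blocks followed by decoding implements the $\bF_{q_0}$-$CCZ$ on $(x, y, z)$. This is the quantum analogue of the Chudnovsky--Chudnovsky multiplication-friendly codes, and the heart of the proof will be (i) constructing such inner codes with favourable enough parameters, and (ii) verifying that the concatenation preserves both the CSS orthogonality and the transversality of $CCZ$. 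Once this multiplication-friendly machinery is set up, Theorem~\ref{thm:constalphinf} follows by tracking rate and relative distance through the AG instantiation of~\ref{it:ctoq} and a constant number of applications of~\ref{it:concat}.
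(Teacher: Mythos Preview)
Your proposal is essentially the paper's approach: instantiate Theorem~\ref{thm:generaltrans} with AG codes (Corollary~\ref{cor:transAG}) to get asymptotically good $CCZ$-transversal codes over some large constant $q_0=q^m$, then apply the concatenation of Proposition~\ref{prop:alphred} a constant number of times to reduce to alphabet $q$, with the $U$ claim following from Lemma~\ref{lem:CCZtoU}. Two small refinements worth noting: the inner code is not required to itself support transversal $CCZ_q$ but rather to be \emph{$4$-multiplication-friendly} (Definition~\ref{def:multfriend})---the fourth factor is needed to absorb the coefficients vector $b$ of the outer code---and in the paper the inner codes used are simply classical multiplication-friendly codes viewed as CSS codes with trivial $X$-side (Lemma~\ref{lem:ctoqMF}), which is fine since they have constant size and so their distance $\geq 1$ suffices.
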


For magic state distillation, \cite{bravyi_magic-state_2012} showed that a quantum code of length $n$, dimension $k$, and distance $d$ over supporting a transversal $T$ gate over a binary alphabet yields a protocol for constructing magic $T$ states ($T\ket{+}$) with overhead $\log^\gamma(1/\epsilon)$ for $\gamma=\log(n/k)/\log(d)$, where $\epsilon>0$ is the target error rate. That is, $\kappa$ states with $<\epsilon$ infidelity to $T\ket{+}$ can be distilled from $\kappa\cdot\log^\gamma(1/\epsilon)$ states with some constant infidelity. These techniques extend to $U$ gates over larger $q$ (as observed by \cite{krishna_towards_2019}) and also to $CCZ$ gates (as observed by \cite{paetznick_universal_2013}). Hence Theorem~\ref{thm:constalphinf} yields $\gamma\rightarrow 0$ over arbitrary constant prime power alphabet sizes $q$.

Previously known constructions\footnote{\label{footnote:concurrent} See Remark~\ref{remark:concurrent} for relevant concurrent work.} achieving $\gamma\rightarrow 0$ for non-Clifford gates required alphabets growing with the block length \cite{krishna_towards_2019}. Meanwhile, previously known constructions with constant alphabet size $q$ only achieved $\gamma=\Omega(1)$ bounded away from $0$ (see e.g.~\cite{hastings_distillation_2018}).

To the best of our knowledge, Theorem~\ref{thm:constalphinf} provides the first known asymptotically good quantum codes supporting a transversal non-Clifford gate.\footnoteref{footnote:concurrent} However, \cite{krishna_towards_2019} constructed codes with linear dimension and distance supporting transversal $U_q$ gates over qudits of dimension $q=\Theta(n)$ growing linearly in the block length $n$. Their construction was based on Reed-Solomon codes, which consist of evaluations of polynomials and hence possess an algebraic structure amenable to transversal $CCZ$ and $U$ gates.

To prove Theorem~\ref{thm:constalphinf}, we generalize the construction of \cite{krishna_towards_2019} (see Theorem~\ref{thm:generaltrans}) to obtain quantum codes supporting transversal $CCZ,U$ gates from any classical codes $C$ for which $C$, its dual code $C^\perp$, and its triple-product code $C^{*3}$ are all asymptotically good. Here $C^{*3}$ denotes the code generated by the component-wise products of triples of codewords in $C$. We then instantiate the classical codes $C$ in this construction (in a black-box manner) using algebraic-geometric (AG) codes, which are similar to Reed-Solomon codes but give a constant instead of growing alphabet size $q'$. We choose $q'$ to be a power of the desired qudit dimension $q$, and then reduce the alphabet size from $q'$ to $q$ via concatenation.

For this alphabet reduction, we extend the theory of classical \textit{multiplication-friendly} codes to the quantum setting (see Section~\ref{sec:multfriend}). Such codes allow for concatenation while preserving the algebraic structure of a code, which in our setting means we can preserve support for a transversal $CCZ$ or $U$ gate. By constructing quantum multiplication-friendly codes with Reed-Solomon codes, we show how to perform such alphabet reduction with just a small loss in the code dimension and distance.

In fact, as described in Theorem~\ref{thm:RSconstalph} in Appendix~\ref{sec:RSconcat}, we are able to almost recover the parameters in Theorem~\ref{thm:constalphinf} using concatenated Reed-Solomon codes in place of AG codes. Specifically, using repeated concatenation, we obtain codes of dimension and distance
\begin{equation}
  \label{eq:RSconcatinf}
  k \geq \Omega\left(\frac{n}{2^{O(\log^*n)}}\right), \hspace{1em} d\geq\Omega\left(\frac{n}{2^{O(\log^*n)}}\right)
\end{equation}
supporting transversal $CCZ_q$ (and also $U_q$ for $q\geq 5$). Here $\log^*n$ denotes the extremely slow-growing function given by the number of times one must apply a logarithm to obtain a value $\leq 1$, starting from $n$. While these parameters are not quite optimal, the construction is much more elementary than in Theorem~\ref{thm:constalphinf}, as it avoids the need for AG codes, which are fairly involved to construct.

\begin{remark}
  \label{remark:concurrent}
  We recently became aware of an independent and concurrent work \cite{wills_constant-overhead_2024} that obtained a similar result as Theorem~\ref{thm:constalphinf}, also using AG codes. \cite{wills_constant-overhead_2024} apply these codes to perform magic state distillation with constant overhead. We note that our use of concatenation with multiplication-friendly codes is distinct from their approach, and may in fact help address some open questions they pose pertaining to alphabet reduction. For instance, \cite{wills_constant-overhead_2024} primarily consider codes over $\bF_{1024}$, and leave it as an open problem to obtain codes with transversal non-Clifford gates over arbitrarily small (e.g.~binary) alphabets.

  Independently, Christopher A.~Pattison and Quynh T.~Nguyen obtained a (still unpublished) result for low-overhead magic state distillation that uses ideas related to those in Theorem~\ref{thm:RSconstalph}.
\end{remark}

\subsection{Open Problems}
Our work raises some interesting open questions. Although our codes achieve asymptotically optimal dimension and distance while supporting transversal non-Clifford gates, they have high-weight stabilizers, that is, they are not LDPC. Constructing such codes that are LDPC could present new possibilities for fault-tolerance, as low-weight stabilizers are much easier to measure in a fault-tolerant manner.

Note that we give general methods for preserving transversal non-Clifford gates under concatenation, resulting in concatenated codes with near-optimal asymptotic parameters (e.g.~Theorem~\ref{thm:RSconstalph}). Concatenated codes, though not LDPC, do possess many low-weight stabilizers, and as such form the basis for many fault-tolerance schemes (e.g.~\cite{aharonov_fault-tolerant_1997,yamasaki_time-efficient_2024,pattison_hierarchical_2023}). It is an interesting question whether our results can yield improvements for such schemes, beyond simply through an application of improved magic state distillation.

Furthermore, while our paper focuses on achieving optimal asymptotic performance, it is also an interesting question to determine whether our results may yield improved codes or protocols for non-Clifford gates in finite-sized instances that may be relevant in near-term devices.

\subsection{Organization}
The main body of this paper proves Theorem~\ref{thm:constalphinf}, while the appendix provides the concatenation-based construction with parameters described by~(\ref{eq:RSconcatinf}).

Specifically, the organization is as follows. Section~\ref{sec:prelim} contains preliminary definitions and basic results. In Section~\ref{sec:latrans}, we generalize the construction of \cite{krishna_towards_2019}, and then instantiate it with AG codes to obtain asymptotically good codes supporting transversal $CCZ$ (and $U$) gates over large but constant-sized alphabets. In Section~\ref{sec:multfriend}, we describe classical multiplication-friendly codes and introduce a quantum analogue. We then show that classical multiplication-friendly codes yield quantum ones, though with poor distance. These classical constructions are nevertheless sufficient when we only need a constant-sized instance; Appendix~\ref{sec:qmultfriend} gives a more inherently quantum construction of  multiplication-friendly codes with better asymptotic performance. In Section~\ref{sec:alphred} we show that the concatenation of a code supporting transversal $CCZ_{q'}$ with a multiplication-friendly code yields a code supporting transversal $CCZ_q$ over a smaller alphabet size $q|q'$. Section~\ref{sec:agconcat} proves applies a constant number of iterations of such concatenation starting from an AG-based code to prove Theorem~\ref{thm:constalphinf}. Meanwhile, Appendix~\ref{sec:RSconcat} applies a superconstant number of iterations of such concatenation starting from a Reed-Solomon-based code to obtain codes satisfying the bounds~(\ref{eq:RSconcatinf}).

\section{Preliminaries}
\label{sec:prelim}
This section presents preliminary definitions and some basic results.

\subsection{Notation}
\label{sec:notation}
For $n\in\bN$, we let $[n]=\{1,2,\dots,n\}$. For vectors $x,y\in\bF_q^n$, we let $x\cdot y=\sum_{j\in[n]}x_iy_i$ denote the standard bilinear form, and we let $x*y=(x_iy_i)_{i\in[n]}\in\bF_q^n$ denote the component-wise product. For a set $A\subseteq\bF_q$, we let $A^c=\bF_q\setminus A$ denote the complement of $A$. For a finite set $S$, we let $\ket{S}=(1/\sqrt{S})\sum_{s\in S}\ket{s}$ denote the uniform superposition over the elements of $S$.

For functions $f:A\rightarrow B$ and $g:B\rightarrow C$, we let $g\circ f:A\rightarrow C$ be the standard composition $(g\circ f)(x)=g(f(x))$. Letting $\cP(S)$ denote the power set of a set $S$, then if instead we have functions $f:A\rightarrow\cP(B)$ and $g:B\rightarrow\cP(C)$, we let $g\circ f:A\rightarrow\cP(C)$ be the composition given by $(g\circ f)(x)=\bigcup_{y\in f(x)}g(y)$.

\subsection{CSS Codes}
This section presents basic definitions pertaining to codes. We begin with the definition of a classical code:

\begin{definition}
  A \textbf{classical (linear) code $C$} of \textbf{length $n$}, \textbf{dimension $k$}, and \textbf{alphabet size $q$} is a $k$-dimensional linear subspace $C\subseteq\bF_q^n$. The \textbf{distance $d$} of $C$ is defined as $d=\min_{z\in C\setminus\{0\}}|z|$. As a shorthand, we say that $C$ is an $[n,k]_q$ or an $[n,k,d]_q$ code. The \textbf{dual $C^\perp\subseteq\bF_q^n$} of $C$ is defined as $C^\perp=\{x\in\bF_q^n:x\cdot z=0\;\forall z\in C\}$.

  An \textbf{encoding function $\Enc$} for $C$ is a linear isomorphism $\Enc:\bF_q^k\xrightarrow{\sim}C$.
\end{definition}

Below we introduce some notation for bounded-degree polynomials, whose evaluations yield the well-known Reed-Solomon codes.

\begin{definition}
  \label{def:polynomials}
  Given a polynomial $f(X)\in\bF_q[X]$ and a set $A\subseteq\bF_q$, we define the vector $\evl_A(f)\in\bF_q^A$ by $\evl_A(f)=(f(x))_{x\in A}$. We let $\bF_q[X]^{<k}$ denote the set of polynomials of degree $<k$. We then extend $\evl_A$ (or $\evl_{A^c}$) to act on sets of polynomials. For instance, if $k\leq|A|$, then $\evl_A(\bF_q[X]^{<k})$ is a (classical) \textbf{punctured Reed-Solomon code} (or simply \textbf{Reed-Solomon code}) of alphabet size $q$, dimension $k$, and length $|A|$.
\end{definition}

It is well known that $\evl_A(\bF_q[X]^{<k})$ has parameters $[|A|,\; k,\; |A|-k+1]_q$, and that if $A=\bF_q$, then the dual is the Reed-Solomon code $\evl_{\bF_q}(\bF_q[X]^{<k})^\perp=\evl_{\bF_q}(\bF_q[X]^{<q-k})$.

In this paper, we restrict attention to quantum codes given by the well-known CSS paradigm, described below.

\begin{definition}
  A \textbf{quantum CSS code $Q$} of \textbf{length $n$}, \textbf{dimension $k$}, and \textbf{alphabet size} (or \textbf{local dimension}) $q$ is a pair $Q=\CSS(Q_X,Q_Z)$ of linear subspaces $Q_X,Q_Z\subseteq\bF_q^n$ such that $Q_X^\perp\subseteq Q_Z$ and $k=\dim(Q_Z)-\dim(Q_X^\perp)$. The \textbf{distance $d$} of $Q$ is defined as
  \begin{equation*}
    d = \min_{y\in(Q_X\setminus Q_Z^\perp)\cup(Q_Z\setminus Q_X^\perp)}|y|.
  \end{equation*}
  As a shorthand, we say that $C$ is an $[[n,k]]_q$ or an $[[n,k,d]]_q$ code.

  We sometimes refer to elements of $Q_X^\perp$ (resp.~$Q_Z^\perp$) as \textbf{$X$-stabilizers} (resp.~\textbf{$Z$-stabilizers}).

  A \textbf{$X$ (resp.~$Z$) encoding function $\Enc_X$ (resp.~$\Enc_Z$)} for $Q$ is an $\bF_q$-linear isomorphism $\Enc_X:\bF_q^k\xrightarrow{\sim}(Q_X/Q_Z^\perp)$ (resp.~$\Enc_Z:\bF_q^k\xrightarrow{\sim}(Q_Z/Q_X^\perp)$). Given a bilinear form $B:\bF_q^k\times\bF_q^k\rightarrow\bF_q$, we say that a pair $(\Enc_X,\Enc_Z)$ of $X,Z$ encoding functions are \textbf{compatible with respect to $B$} if it holds for every $x,z\in\bF_q^k$ and every $x'\in\Enc_X(x),z'\in\Enc_Z(z)$ that $B(x,z)=x'\cdot z'$.

  We write $Q=\CSS(Q_X,Q_Z;\Enc_Z)$ (resp.~$Q=\CSS(Q_X,Q_Z;\Enc_X,\Enc_Z)$) to refer to a CSS code with a specified $Z$ (resp.~$X$ and $Z$) encoding function.
\end{definition}

\begin{remark}
  We will sometimes view the domain of the encoding function $\Enc$ of a classical code or the encoding functions $\Enc_X,\Enc_Z$ of a quantum code of dimension $k$ as the space $\bF_{q^k}$. Such a view is valid because $\bF_{q^k}\cong\bF_q^k$, though this isomorphism is non-canonical, so we will explicitly note whenever we make this choice.
\end{remark}


\begin{lemma}
  \label{lem:findcompatible}
  Given an $[[n,k]]_q$ code $Q=\CSS(Q_X,Q_Z;\Enc_Z)$ and a nondegenerate bilinear form $B:\bF_q^k\times\bF_q^k\rightarrow\bF_q$, there exists an $X$ encoding function $\Enc_X$ that is compatible with $\Enc_Z$ with respect to $B$.
\end{lemma}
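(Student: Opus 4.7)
The plan is to exploit the fact that the ambient bilinear form $x \cdot z$ on $\bF_q^n$ descends to a nondegenerate pairing between the quotient spaces $Q_X/Q_Z^\perp$ and $Q_Z/Q_X^\perp$, and then use nondegeneracy of both $B$ and this descended pairing together with the given $\Enc_Z$ to determine $\Enc_X$ uniquely via duality.

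First, I would observe that $Q_X^\perp \subseteq Q_Z$ implies $Q_Z^\perp \subseteq Q_X$, so that $Q_X/Q_Z^\perp$ is well-defined, and an easy dimension count gives $\dim(Q_X/Q_Z^\perp) = \dim(Q_X) - \dim(Q_Z^\perp) = \dim(Q_Z) - \dim(Q_X^\perp) = k$. Next I would check that $(u,v) \mapsto u \cdot v$ on $Q_X \times Q_Z$ descends to a well-defined bilinear form
\begin{equation*}
  \widetilde B : (Q_X/Q_Z^\perp) \times (Q_Z/Q_X^\perp) \rightarrow \bF_q,
\end{equation*}
because altering a representative of either coset by an element of $Q_Z^\perp$ or $Q_X^\perp$ does not change the value of the pairing. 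Moreover, $\widetilde B$ is nondegenerate: if $\bar u \in Q_X/Q_Z^\perp$ pairs to zero against every element of $Q_Z$, then any lift $u \in Q_X$ lies in $Q_Z^\perp$, so $\bar u = 0$; the symmetric argument handles the other side. This is the key geometric content; everything else is formal linear algebra.

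The remaining step is the construction itself. Nondegeneracy of $B$ gives an isomorphism $\psi : \bF_q^k \xrightarrow{\sim} (\bF_q^k)^*$ via $\psi(x) = B(x,\cdot)$; nondegeneracy of $\widetilde B$ gives an isomorphism $\phi : Q_X/Q_Z^\perp \xrightarrow{\sim} (Q_Z/Q_X^\perp)^*$ via $\phi(\bar u) = \widetilde B(\bar u, \cdot)$; and the given linear isomorphism $\Enc_Z : \bF_q^k \xrightarrow{\sim} Q_Z/Q_X^\perp$ induces a dual isomorphism $\Enc_Z^* : (Q_Z/Q_X^\perp)^* \xrightarrow{\sim} (\bF_q^k)^*$ by precomposition. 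I would then define
\begin{equation*}
  \Enc_X \;=\; \phi^{-1} \circ (\Enc_Z^*)^{-1} \circ \psi : \bF_q^k \xrightarrow{\sim} Q_X/Q_Z^\perp,
\end{equation*}
which is an $\bF_q$-linear isomorphism as a composition of such. Tracing through the definitions, for any $x, z \in \bF_q^k$ and any representatives $x' \in \Enc_X(x)$, $z' \in \Enc_Z(z)$, one has $x' \cdot z' = \widetilde B(\Enc_X(x), \Enc_Z(z)) = \phi(\Enc_X(x))(\Enc_Z(z)) = \psi(x)(z) = B(x,z)$, giving compatibility.

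I expect no genuine obstacle here; the whole statement is a standard duality exercise once one checks that the pairing descends nondegenerately to the quotients. The only mild care needed is in verifying the dimension equality $\dim(Q_X/Q_Z^\perp) = k$ and confirming that $x' \cdot z'$ is independent of the chosen representatives, since the definition of compatibility quantifies over all lifts $x', z'$.
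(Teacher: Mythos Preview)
Your proposal is correct and follows essentially the same approach as the paper: both establish that the dot product descends to a nondegenerate pairing between $Q_X/Q_Z^\perp$ and $Q_Z/Q_X^\perp$, and then use this together with the nondegeneracy of $B$ and the given $\Enc_Z$ to force the definition of $\Enc_X$. The only difference is presentational: the paper carries this out concretely by choosing dual bases and writing down a matrix $M$ and its inverse, whereas you phrase the same construction abstractly via the duality isomorphisms $\psi$, $\phi$, and $\Enc_Z^*$.
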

\begin{proof}
  Because $B$ is nondegenerate, we can choose bases $a_1,\dots,a_k$ and $b_1,\dots,b_k$ for $\bF_q^k$ with that $B(a_i,b_j)=\1_{i=j}$. Furthermore, because we have a nondegenerate bilinear form $\langle\cdot,\cdot\rangle:(Q_X/Q_Z^\perp)\times(Q_Z/Q_X^\perp)\rightarrow\bF_q$ given by $\langle x'+Q_Z^\perp,z'+Q_X^\perp\rangle=x'\cdot z'$, we can choose bases $e_1,\dots,e_k$ and $f_1,\dots,f_k$ for $Q_X/Q_Z^\perp$ and $Q_Z/Q_X^\perp$ respectively such that $\langle e_i,f_j\rangle=\1_{i=j}$.

  Define a matrix $M\in\bF_q^{k\times k}$ such that for $j\in[k]$, the $j$th column of $M$ is the representation of $\Enc_Z(b_j)$ in the basis $f_1,\dots,f_k$. Then for $i\in[k]$ define $\Enc_X(a_i)$ to be the $i$th row of the inverse matrix $M^{-1}\in\bF_q^{k\times k}$, where we interpret this row as the representation of an element of $Q_X/Q_Z^\perp$ in the basis $e_1,\dots,e_k$. We then extend $\Enc_X$ by linearity to a function $\Enc_X:Q_X/Q_Z^\perp\rightarrow\bF_q$.

  Because $M^{-1}M=I$ and $\langle e_i,f_j\rangle=\1_{i=j}$, it follows that for every $i,j\in[k]$, we have $\langle\Enc_X(a_i),\Enc_Z(b_j)\rangle=I_{i,j}=\1_{i=j}=B(a_i,b_j)$. Therefore by linearity, it holds for every $x,z\in\bF_q^k$ that $B(x,z)=\langle\Enc_X(x),\Enc_Z(z)\rangle$. Thus $\Enc_X,\Enc_Z$ are compatible with respect to $B$, as desired.
\end{proof}

We will make use of the following notion of concatenation of CSS codes.

\begin{definition}
  \label{def:concatenation}
  Consider CSS codes with specified $X,Z$ encoding functions
  \begin{align*}
    Q^{\mathrm{in}} &= \CSS(Q^{\mathrm{in}}_X,Q^{\mathrm{in}}_Z;\Enc^{\mathrm{in}}_X,\Enc^{\mathrm{in}}_Z) \hspace{1em} \text{ with parameters } [[n^{\mathrm{in}},k^{\mathrm{in}},d^{\mathrm{in}}]]_{q^{\mathrm{in}}} \\
    Q^{\mathrm{out}} &= \CSS(Q^{\mathrm{out}}_X,Q^{\mathrm{out}}_Z;\Enc^{\mathrm{out}}_X\Enc^{\mathrm{out}}_Z) \hspace{1em} \text{ with parameters } [[n^{\mathrm{out}},k^{\mathrm{out}},d^{\mathrm{out}}]]_{q^{\mathrm{out}}},
  \end{align*}
  such that $q^{\mathrm{out}}=(q^{\mathrm{in}})^{k^{\mathrm{in}}}$ so that $\bF_{q^{\mathrm{out}}}\cong\bF_{q^{\mathrm{in}}}^{k^{\mathrm{in}}}$, and also assume that
  \begin{equation*}
    (\Enc^{\mathrm{in}}_X:\bF_{q^{\mathrm{out}}}\rightarrow Q^{\mathrm{in}}_X/{Q^{\mathrm{in}}_Z}^\perp,\; \Enc^{\mathrm{in}}_Z:\bF_{q^{\mathrm{out}}}\rightarrow Q^{\mathrm{in}}_Z/{Q^{\mathrm{in}}_X}^\perp)
  \end{equation*}
  are compatible with respect to the the trace bilinear form $(x,z)\mapsto\tr_{\bF_{q^{\mathrm{out}}}/\bF_{q^{\mathrm{in}}}}(xz)$, and that
  \begin{equation*}
    (\Enc^{\mathrm{out}}_X:\bF_{q^{\mathrm{out}}}^{k^{\mathrm{out}}}\rightarrow Q^{\mathrm{out}}_X/{Q^{\mathrm{out}}_Z}^\perp,\; \Enc^{\mathrm{out}}_Z:\bF_{q^{\mathrm{out}}}^{k^{\mathrm{out}}}\rightarrow Q^{\mathrm{out}}_Z/{Q^{\mathrm{out}}_X}^\perp)
  \end{equation*}
  are compatible with respect to the standard bilinear form $(x,z)\mapsto x\cdot z$.

  Then the \textbf{concatenated code $Q=Q^{\mathrm{in}}\circ Q^{\mathrm{out}}$} is the $[[n^{\mathrm{in}}n^{\mathrm{out}},k^{\mathrm{in}}k^{\mathrm{out}},d^{\mathrm{in}}d^{\mathrm{out}}]]_{q^{\mathrm{in}}}$ CSS code specified by the $\bF_q^{\mathrm{in}}$-linear encoding functions
  \begin{align*}
    \Enc_X &= (\Enc^{\mathrm{in}}_X)^{\oplus n^{\mathrm{out}}}\circ\Enc^{\mathrm{out}}_X \\
    \Enc_Z &= (\Enc^{\mathrm{in}}_Z)^{\oplus n^{\mathrm{out}}}\circ\Enc^{\mathrm{out}}_Z,
  \end{align*}
  so that formally $Q=\CSS(\im(\Enc_X),\im(\Enc_Z);\Enc_X,\Enc_Z)$.
\end{definition}

As a point of notation above, we compose (encoding) functions mapping vectors to sets (or specifically cosets) in the natural way described in Section~\ref{sec:notation}, by simply outputting the union of the outputs of the second function applied to all outputs of the first.

A proof that the concatenated code in Definition~\ref{def:concatenation} is indeed a well-defined CSS code can for instance be found in \cite{bergamaschi_approaching_2022}.

We will also use the following notion of restriction of CSS codes, which simply adds some $Z$-stabilizers to a given code, thereby reducing the dimension without negatively affecting the distance.

\begin{definition}
  \label{def:restriction}
  Let $Q=\CSS(Q_X,Q_Z;\Enc_Z)$ be a $[[n,k,d]]_q$ CSS code. Given a subspace $S\subseteq\bF_q^k$, we define an $[[n,\dim(S),\geq d]]_q$ \textbf{restricted code $Q|_S=\CSS(Q_X,\im(\Enc_Z(S));\Enc_Z|_S)$}.
\end{definition}

\subsection{Transversal $CCZ$ and $T$-Like Gates}
\label{sec:transgates}
Below, we define the $CCZ_q$ gate over $q$-dimensional qudits, as well as a related gate we call $U_q$ that is similar to a $T$ gate, as described in \cite{krishna_towards_2019}.

\begin{definition}
  \label{def:gates}
  For a prime power $q=p^r$ and an element $a\in\bF_q$, define a unitary gate $CCZ_q:(\bC^{\bF_q})^{\otimes 3}\rightarrow(\bC^{\bF_q})^{\otimes 3}$ acting on three qubits of dimension $q$ by
  \begin{equation*}
    CCZ_q^a = \sum_{j_1,j_2,j_3\in\bF_q}e^{2\pi i\tr_{\bF_q/\bF_p}(a\cdot j_1j_2j_3)/p}\ket{j_1,j_2,j_3}\bra{j_1,j_2,j_3},
  \end{equation*}
  and define a unitary gate $U_q^a:\bC^{\bF_q}\rightarrow\bC^{\bF_q}$ acting on a single qudit of dimension $q$ by
  \begin{equation*}
    U_q^a = \sum_{j\in\bF_q}e^{2\pi i\tr_{\bF_q/\bF_p}(a\cdot j^3)/p}\ket{j}\bra{j}.
  \end{equation*}
  When $a=1$, we denote $CCZ_q^1=CCZ_q$ and $U_q^1=U_q$.
\end{definition}

We are interested in codes supporting transversal $CCZ_q$ and $U_q$ gates, meaning that by applying $CCZ_q$ or $U_q$ gates independently across the physical qudits in a code block (or for $CCZ_q$, across three code blocks), we induce a logical $CCZ_q$ or $U_q$ gate on all of the encoded qudits.

\begin{definition}
  \label{def:supporttrans}
  For $h\in[3]$, let $Q^{(h)}=\CSS(Q^{(h)}_X,Q^{(h)}_Z;\Enc^{(h)}_Z)$ be a $[[n,k]]_q$ CSS code. We say that the triple $(Q^{(1)},Q^{(2)},Q^{(3)})$ \textbf{supports a transversal $CCZ_q$ gate} if the following holds: there exists some $b\in\bF_q^n$ (that we call the \textbf{coefficients vector}) such that for every $z^{(1)},z^{(2)},z^{(3)}\in\bF_q^k$, it holds for every ${z^{(1)}}'\in\Enc_Z(z^{(1)}),\;{z^{(2)}}'\in\Enc_Z(z^{(2)}),\;{z^{(3)}}'\in\Enc_Z(z^{(3)})$ that
  \begin{equation}
    \label{eq:supporttransCCZ}
    \sum_{j\in[k]}z^{(1)}_jz^{(2)}_jz^{(3)}_j = \sum_{j\in[n]}b_j\cdot{z^{(1)}_j}'{z^{(2)}_j}'{z^{(3)}_j}'.
  \end{equation}
  If $Q^{(1)}=Q^{(2)}=Q^{(3)}$, we say that this code supports a transversal $CCZ_q$ gate.

  Similarly, we say that a single $[[n,k]]_q$ code $Q=\CSS(Q_X,Q_Z;\Enc_Z)$ \textbf{supports a transversal $U_q$ gate} with \textbf{coefficients vector $b$} if it holds for every $z\in\bF_q^k$ and every $z'\in\Enc_Z(z)$ that
  \begin{equation}
    \label{eq:supporttransU}
    \sum_{j\in[k]}z_j^3=\sum_{j\in[n]}b_j\cdot{z'_j}^3.
  \end{equation}
\end{definition}


The following basic claim shows that we may essentially restrict attention to constructing codes supporting a transversal $CCZ_q$ gate.

\begin{lemma}
  \label{lem:CCZtoU}
  If a code $Q$ supports a transversal $CCZ_q$ gate, then $Q$ supports a transversal $U_q$ gate.
\end{lemma}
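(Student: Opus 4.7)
The proof plan is essentially a one-step specialization of the trilinear identity defining transversal $CCZ_q$ to its diagonal. By hypothesis, there exists a coefficients vector $b\in\bF_q^n$ such that for all $z^{(1)},z^{(2)},z^{(3)}\in\bF_q^k$ and all representatives ${z^{(h)}}'\in\Enc_Z(z^{(h)})$, equation~(\ref{eq:supporttransCCZ}) holds. I would simply set $z^{(1)}=z^{(2)}=z^{(3)}=z$, and pick a single representative $z'\in\Enc_Z(z)$ to play the role of all three of ${z^{(1)}}',{z^{(2)}}',{z^{(3)}}'$. The left-hand side of~(\ref{eq:supporttransCCZ}) then collapses to $\sum_{j\in[k]}z_j^3$, while the right-hand side becomes $\sum_{j\in[n]}b_j\cdot{z'_j}^3$, which is exactly the identity~(\ref{eq:supporttransU}) required for $Q$ to support a transversal $U_q$ gate, with the very same coefficients vector $b$.

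There is essentially no obstacle to this argument; the only thing to verify is that the $CCZ_q$ definition, specialized to $Q^{(1)}=Q^{(2)}=Q^{(3)}=Q$, does allow one to take all three physical representatives equal. This is immediate from the statement of Definition~\ref{def:supporttrans}, which quantifies over each ${z^{(h)}}'$ independently inside $\Enc_Z(z^{(h)})$, so the diagonal choice is permitted. Consequently the lemma follows with no additional work, and the same $b$ serves both gates.
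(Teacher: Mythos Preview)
Your argument is correct and is precisely the specialization the paper has in mind; the paper's own proof simply states that the claim follows directly from Definition~\ref{def:supporttrans}, which is exactly the diagonal instantiation you wrote out.
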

\begin{proof}
  The claim follows directly from Definition~\ref{def:supporttrans}.
\end{proof}

The following basic lemma shows that Definition~\ref{def:supporttrans} is well-formulated.

\begin{lemma}
  Let $(Q^{(h)}=\CSS(Q^{(h)}_X,Q^{(h)}_Z;\Enc^{(h)}_Z))_{h\in[3]}$ be a triple of $[[n,k]]_q$ CSS codes that supports a transversal $CCZ_q$ gate with coefficients vector $b\in\bF_q^n$. For $h\in[3]$, define an isometry $\Enc^{(h)}:(\bC^q)^{\otimes k}\rightarrow(\bC^q)^{\otimes n}$ by $\Enc^{(h)}\ket{z}=\ket{\Enc^{(h)}_Z(z)}$. Then for every $\ket{\phi^{(1)}},\ket{\phi^{(2)}},\ket{\phi^{(3)}}\in(\bC^q)^{\otimes k}$,
  \begin{align*}
    \hspace{1em}&\hspace{-1em} \left(\bigotimes_{j\in[n]}CCZ_q^{b_j}\right)(\Enc^{(1)}\otimes\Enc^{(2)}\otimes\Enc^{(3)})\ket{\phi^{(1)},\phi^{(2)},\phi^{(3)}} \\
                &= (\Enc^{(1)}\otimes\Enc^{(2)}\otimes\Enc^{(3)})\left(\bigotimes_{j\in[k]}CCZ_q\right)\ket{\phi^{(1)},\phi^{(2)},\phi^{(3)}},
  \end{align*}
  where the $j$th $CCZ_q$ gate in the tensor products above acts on the $j$th qubit of the three length-$k$ or length-$n$ states given as the input.

  Similarly, if $Q=\CSS(Q_X,Q_Z;\Enc_Z)$ is an $[[n,k]]_q$ CSS code that supports a transversal $U_q$ gate with coefficients vector $b\in\bF_q^n$, then defining the isometry $\Enc:(\bC^q)^{\otimes k}\rightarrow(\bC^q)^{\otimes n}$ by $\Enc\ket{z}=\ket{\Enc_Z(z)}$, it follows that for every $\ket{\phi}\in(\bC^q)^{\otimes k}$,
  \begin{equation*}
    \left(\bigotimes_{j\in[n]}U_q^{b_j}\right)\Enc\ket{\phi} = \Enc\left(\bigotimes_{j\in[k]}U_q\right)\ket{\phi}.
  \end{equation*}
\end{lemma}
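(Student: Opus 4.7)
The plan is to reduce to computational basis inputs and then verify the identity by a direct phase calculation. By $\bC$-linearity it suffices to check the claim when each $\ket{\phi^{(h)}}$ is a computational basis vector $\ket{z^{(h)}}$ with $z^{(h)}\in\bF_q^k$. In that case the encoded state $(\Enc^{(1)}\otimes\Enc^{(2)}\otimes\Enc^{(3)})\ket{z^{(1)},z^{(2)},z^{(3)}}$ is, by the definition of $\Enc^{(h)}$ together with the notation $\ket{S}$ for the uniform superposition, simply the tensor product of three uniform superpositions over the cosets $\Enc^{(h)}_Z(z^{(h)})\subseteq\bF_q^n$.

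Next I would apply the transversal operator $\bigotimes_{j\in[n]}CCZ_q^{b_j}$. Each factor is diagonal in the computational basis, so the operator acts on a term $\ket{{z^{(1)}}',{z^{(2)}}',{z^{(3)}}'}$ of the triple superposition by multiplication by the phase $e^{2\pi i\tr_{\bF_q/\bF_p}(\sum_{j\in[n]}b_j{z^{(1)}_j}'{z^{(2)}_j}'{z^{(3)}_j}')/p}$. Invoking hypothesis~(\ref{eq:supporttransCCZ}) of Definition~\ref{def:supporttrans}, this phase equals $e^{2\pi i\tr_{\bF_q/\bF_p}(\sum_{j\in[k]}z^{(1)}_jz^{(2)}_jz^{(3)}_j)/p}$, which is exactly the phase that the logical gate $\bigotimes_{j\in[k]}CCZ_q$ applies to $\ket{z^{(1)},z^{(2)},z^{(3)}}$.

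The only point worth remarking on is that this phase must be identical for every triple of coset representatives, since otherwise it could not be factored out of the three independent sums that make up the encoded state; but that representative-independence is precisely the content of the ``for every ${z^{(h)}}'\in\Enc^{(h)}_Z(z^{(h)})$'' quantifier in Definition~\ref{def:supporttrans}. Once the phase is pulled out, what remains is exactly the encoding of $(\bigotimes_{j\in[k]}CCZ_q)\ket{z^{(1)},z^{(2)},z^{(3)}}$, establishing the first identity. The $U_q$ statement follows from an identical argument with a single code block in place of three and hypothesis~(\ref{eq:supporttransU}) replacing~(\ref{eq:supporttransCCZ}). I do not anticipate any substantive obstacle here: the lemma essentially confirms that Definition~\ref{def:supporttrans} was formulated with the quantifiers needed to guarantee representative-independence of the transversal phase, and the proof is little more than the computation that motivates the definition.
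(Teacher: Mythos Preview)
Your proposal is correct and follows essentially the same approach as the paper: reduce to computational basis states by linearity, expand the encoded state as a uniform superposition over coset representatives, apply the diagonal transversal gate to obtain a phase on each term, invoke~(\ref{eq:supporttransCCZ}) to identify that phase with the logical one, and factor it out using the representative-independence guaranteed by the quantifier in Definition~\ref{def:supporttrans}. The paper's proof is exactly this computation written out line by line, with the $U_q$ case dismissed as analogous.
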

\begin{proof}
  For every $z^{(1)},z^{(2)},z^{(3)}\in\bF_q^k$, we have
  \begin{align*}
    \hspace{1em}&\hspace{-1em}\left(\bigotimes_{j\in[n]}CCZ_q^{b_j}\right)(\Enc^{(1)}\otimes\Enc^{(2)}\otimes\Enc^{(3)})\ket{z^{(1)},z^{(2)},z^{(3)}} \\
    &= \frac{1}{|Q_X^\perp|^{3/2}}\sum_{({z^{(h)}}'\in\Enc^{(h)}_Z(z))_{h\in[3]}}e^{2\pi i\tr_{\bF_q/\bF_p}(\sum_{j\in[n]}b_j\cdot {z^{(1)}_j}'{z^{(2)}_j}'{z^{(3)}_j}')/p}\ket{{z^{(1)}}',{z^{(2)}}',{z^{(3)}}'} \\
    &= \frac{1}{|Q_X^\perp|^{3/2}}\sum_{({z^{(h)}}'\in\Enc^{(h)}_Z(z))_{h\in[3]}}e^{2\pi i\tr_{\bF_q/\bF_p}(\sum_{j\in[k]}z^{(1)}_jz^{(2)}_jz^{(3)}_j)/p}\ket{{z^{(1)}}',{z^{(2)}}',{z^{(3)}}'} \\
    &= e^{2\pi i\tr_{\bF_q/\bF_p}(\sum_{j\in[k]}z^{(1)}_jz^{(2)}_jz^{(3)}_j)/p}(\Enc^{(1)}\otimes\Enc^{(2)}\otimes\Enc^{(3)})\ket{z^{(1)},z^{(2)},z^{(3)}} \\
    &= (\Enc^{(1)}\otimes\Enc^{(2)}\otimes\Enc^{(3)})\bigotimes_{j\in[k]}\left(e^{2\pi i\tr_{\bF_q/\bF_p}(z^{(1)}_jz^{(2)}_jz^{(3)}_j)/p}\ket{z^{(1)}_j,z^{(2)}_j,z^{(3)}_j}\right) \\
    &= (\Enc^{(1)}\otimes\Enc^{(2)}\otimes\Enc^{(3)})\left(\bigotimes_{j\in[k]}CCZ_q\right)\ket{z^{(1)},z^{(2)},z^{(3)}},
  \end{align*}
  where the second equality above holds by~(\ref{eq:supporttransCCZ}). Thus the desired result for the the $CCZ_q$ gate holds; the proof for the $U_q$ result is analogous.
\end{proof}

\section{Codes over Large Alphabets Supporting Transversal Gates}
\label{sec:latrans}
Our goal is to construct length-$n$ codes with distance and dimension close to $\Theta(n)$, which support a transversal $CCZ_q$ gate with an arbitrary constant alphabet size $q$. \cite{krishna_towards_2019} presented a construction of quantum punctured Reed-Solomon codes with length, distance, and dimension all $\Theta(n)$, but over a linear-sized alphabet $q=\Theta(n)$. In this section, we present an abstracted version of the construction of \cite{krishna_towards_2019}. We then show how to instantiate this construction with punctured algebraic geometry (AG) codes, which have the advantage over Reed-Solomon codes of having an alphabet size that is constant (but still slightly constrained). We will ultimately also present a way of further reducing the alphabet size to an arbitrary prime power.

\begin{theorem}
  \label{thm:generaltrans}
  Let $C$ be a $[n,\ell,d]_q$ classical code such that $C^{*3}:=\spn\{C*C*C\}$ has distance $d'$ and $C^\perp$ has distance $d^\perp$. Then for every $k<\min\{\ell,d,d',d^\perp\}$, there exists an $[[n-k,\; k,\; \geq\min\{d,d^\perp\}-k]]_q$ quantum CSS code that supports a transversal $CCZ_q$ gate.
\end{theorem}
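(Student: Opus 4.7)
The plan is to construct the code by puncturing $C$ and $C^\perp$ at the same set of $k$ coordinates, which without loss of generality I take to be $[k]=\{1,\dots,k\}$. Write $T=\{k+1,\dots,n\}$, let $\pi_T\colon\bF_q^n\to\bF_q^{n-k}$ denote coordinate projection, and set
$$Q_Z=\pi_T(C),\qquad Q_X=\pi_T(C^\perp)\subseteq\bF_q^{n-k}.$$
The hypotheses $k<d$ and $k<d^\perp$ force $\pi_T|_C$ and $\pi_T|_{C^\perp}$ to be injective (any nonzero kernel element would be a codeword supported on $[k]$ of weight at most $k$, contradicting the relevant distance), so $\dim Q_Z=\ell$ and $\dim Q_X=n-\ell$. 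A direct orthogonality calculation, extending any $w\in\bF_q^{n-k}$ by zeros on $[k]$, shows $Q_X^\perp=\pi_T(C_0)$ where $C_0:=\{c\in C:c_{[k]}=0\}$, so the CSS condition $Q_X^\perp\subseteq Q_Z$ holds automatically. The same hypothesis $k<d^\perp$ makes the restriction map $c\mapsto c_{[k]}$ a surjection $C\twoheadrightarrow\bF_q^{[k]}$ with kernel $C_0$, giving $\dim C_0=\ell-k$ and hence quantum code dimension $\ell-(\ell-k)=k$.

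I would next define the $Z$ encoding $\Enc_Z\colon\bF_q^k\to Q_Z/Q_X^\perp$ by sending $z$ to the coset of $\pi_T(c)$ for any lift $c\in C$ with $c_{[k]}=z$; since any two such lifts differ by an element of $C_0$, this coset is well defined. For the distance, any nonzero $y\in Q_Z\setminus Q_X^\perp$ equals $\pi_T(c)$ for a unique $c\in C\setminus C_0$, and $c$ must be nonzero, so $|y|=|c|-|c_{[k]}|\geq d-k$. The symmetric calculation on the $X$-side using $C^\perp$ gives $|y|\geq d^\perp-k$ for $y\in Q_X\setminus Q_Z^\perp$, and combining yields quantum distance at least $\min\{d,d^\perp\}-k$.

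The transversal $CCZ$ property is the heart of the argument, and this is where the triple-product hypothesis enters. Substituting $z^{(h)}=c^{(h)}_{[k]}$ and ${z^{(h)}}'=\pi_T(c^{(h)})$ into~(\ref{eq:supporttransCCZ}) shows that a coefficient vector $b\in\bF_q^{n-k}$ witnesses transversality if and only if, after setting $\tilde b\in\bF_q^n$ by $\tilde b_{[k]}=(1,\dots,1)$ and $\tilde b_T=-b$,
$$\tilde b\cdot(c^{(1)}*c^{(2)}*c^{(3)})=0 \quad\text{for all }c^{(1)},c^{(2)},c^{(3)}\in C.$$
Since the componentwise products $c^{(1)}*c^{(2)}*c^{(3)}$ span $C^{*3}$ by definition, this is precisely the condition $\tilde b\in(C^{*3})^\perp$ with the prescribed values on $[k]$. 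The key linear-algebraic fact is that the projection $\pi_{[k]}\colon(C^{*3})^\perp\to\bF_q^{[k]}$ is surjective if and only if no nonzero element of $C^{*3}$ is supported on $[k]$---which is exactly the hypothesis $k<d'$. Thus the required $\tilde b$ exists, yielding the desired $b$.

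The main obstacle is not any single deep step but rather the careful coordination of the four strict inequalities $k<\ell$, $k<d$, $k<d^\perp$, $k<d'$, each playing a distinct role: $k<\ell$ for nontriviality of the stabilizers, $k<d^\perp$ for surjectivity of the encoding and the correct $X$-stabilizer count, $k<d$ for the $X$-side distance bound (and injectivity of $\pi_T|_C$), and $k<d'$ for the existence of the $CCZ$ coefficient vector. Verifying that a single puncturing simultaneously satisfies all four, and that the resulting $b$ is genuinely independent of the choice of lift $c^{(h)}$ of $z^{(h)}$ (which follows because the identity $\tilde b\cdot v=0$ holds for \emph{every} $v\in C^{*3}$, not just for some representatives), constitutes the bulk of the bookkeeping.
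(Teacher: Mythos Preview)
Your proof is correct and follows essentially the same construction as the paper: puncture $C$ and $C^\perp$ at a common set of $k$ coordinates, take the restriction map as the $Z$-encoding, and use $k<d'$ to produce the $CCZ$ coefficient vector. Your phrasing of the last step---finding $\tilde b\in(C^{*3})^\perp$ with prescribed values on the punctured coordinates via surjectivity of $\pi_{[k]}$---is a clean dual reformulation of the paper's explicit recipe (decode a punctured $C^{*3}$-word uniquely and sum over the deleted coordinates), and the two are equivalent.
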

\begin{proof}
  We define the desired CSS code $Q=\CSS(Q_X,Q_Z;\Enc_Z)$ as follows. Fix an arbitrary subset $A\subseteq[n]$ of size $|A|=k$ such that $C|_A=\bF_q^A$; such a set $A$ must exist by the assumption that $k<\ell$. Let $A^c=[n]\setminus A$, and let
  \begin{align*}
    Q_X &= C^\perp|_{A^c} \\
    Q_Z &= C|_{A^c}.
  \end{align*}
  Then
  \begin{equation*}
    Q_X^\perp = (C\cap(\{0\}^A\times\bF_q^{A^c}))|_{A^c} \subseteq Q_Z
  \end{equation*}
  because
  \begin{align*}
    ((C\cap(\{0\}^A\times\bF_q^{A^c}))|_{A^c})^\perp
    &= (C\cap(\{0\}^A\times\bF_q^{A^c}))^\perp|_{A^c} \\
    &= (C^\perp+(\bF_q^A\times\{0\}^{A^c}))|_{A^c} \\
    &= C^\perp|_{A^c} \\
    &= Q_X.
  \end{align*}
  
  Define $\Enc_Z:\bF_q^A\rightarrow\bF_q^{A^c}/Q_X^\perp$ by
  \begin{equation}
    \label{eq:genenc}
    \Enc_Z(z) = \{c|_{A^c}:c\in C,\;c|_A=z\}.
  \end{equation}

  To begin, we show that this definition indeed gives a valid isomorphism $\Enc_Z:\bF_q^A\xrightarrow{\sim}Q_Z/Q_X^\perp$. First, by the assumption that $C|_A=\bF_q^A$, we have $\Enc_Z(z)\neq\emptyset$. Now if $z'\in\Enc_Z(z)$, then by definition $z'+Q_X^\perp\subseteq\Enc_Z(z)$, as if $c\in C$ has $c|_A=z$, then $(c+c')|_A=z$ for every $c'\in C\cap(\{0\}^A\times\bF_q^{A^c})$. Conversely, if $z',z''\in\Enc_Z(z)$, then there exist $c',c''\in C$ with $c'|_A=c''|_A=z$ and $c'|_{A^c}=z'$, $c''|_{A^c}=z''$, so $(c'-c'')|_{A^c}\in Q_X^\perp$. Thus we must have $\Enc_Z(z)=z'+Q_X^\perp$. Therefore~(\ref{eq:genenc}) indeed gives a valid map $\Enc_Z:\bF_q^A\rightarrow\bF_q^{A^c}/Q_X^\perp$, which is also by definition linear. By definition every coset in $\im(\Enc_Z)$ lies inside $Q_Z$. Because $C$ has distance $d>k$, for every $z\neq 0$ we have $\Enc_Z(z)\neq 0$, so $\Enc_Z$ is injective. Furthermore, $\Enc_Z$ is surjective because $C|_A=\bF_q^A$, so $\Enc_Z:\bF_q^A\rightarrow Q_Z/Q_X^\perp$ is indeed an isomorphism. Therefore we have a well-defined CSS code $Q=\CSS(Q_X,Q_Z;\Enc_Z)$.
  
  By definition, $Q$ has length $|A^c|=n-k$ and dimension $|A|=k$. Furthermore, $Q_X=C^\perp|_{A^c}$ has distance $\geq d^\perp-|A|=d^\perp-k$, while $Q_Z\subseteq C|_{A^c}$ has distance $\geq d-k$, so $Q$ has distance $\geq\min\{d,d^\perp\}-k$.

  It remains to be shown that $Q$ supports a transversal $CCZ_q$ gate. For this purpose, we define a linear map $b:\bF_q^{A^c}\rightarrow\bF_q$ computed as follows. First, given a vector $z'\in C^{*3}|_{A^c}\subseteq\bF_q^{A^c}$, then by the assumption that $C^{*3}$ has distance $d'>k=|A|$, there exists a unique $c\in C^{*3}$ with $z'=c|_{A^c}$. Then let $b(z')=\sum_{j\in A}c_j$. Thus we have defined a linear map $b|_{C^{*3}}:C^{*3}\rightarrow\bF_q$; as $C^{*3}\subseteq\bF_q^{A^c}$, we may extend this map (in an arbitrary way) to a linear map $b:\bF_q^{A^c}\rightarrow\bF_q$. Then viewing $b$ as a vector in $\bF_q^{A^c}$ where $b(x)=\sum_{j\in A^c}b_jx_j$, then $b$ is our desired coefficients vector (see Definition~\ref{def:supporttrans}).

  Now fix $z^{(1)},z^{(2)},z^{(3)}\in\bF_q^A=\bF_q^k$ and consider any choice of representatives ${z^{(h)}}'\in\Enc_Z(z^{(h)})$ for $h\in[3]$. The assumption that $k<d$ implies that there exists a unique $c^{(h)}\in C$ satisfying ${z^{(h)}}'=c^{(h)}|_{A^c}$. Then $z^{(h)}=c^{(h)}|_A$, and the assumption that $k<d'$ ensures that $c^{(1)}*c^{(2)}*c^{(3)}$ is the unique element of $C^{*3}$ whose restriction to $A^c$ equals ${z^{(1)}}'{z^{(2)}}'{z^{(3)}}'$. Therefore by construction we have
  \begin{equation*}
    \sum_{j\in A^c}b_j\cdot{z^{(1)}_j}'{z^{(2)}_j}'{z^{(3)}_j}' = \sum_{j\in A}c^{(1)}_jc^{(2)}_jc^{(3)}_j = \sum_{j\in A}z^{(1)}_jz^{(2)}_jz^{(3)}_j,
  \end{equation*}
  which is precisely the desired equality~(\ref{eq:supporttransCCZ}).
\end{proof}

Thus if we have a family of constant-rate classical codes $C$ such that $C$, $C^{*3}$, and $C^\perp$ all have constant relative distance, then setting $k=\kappa n$ for a sufficiently small constant $0<\kappa<1$ in Theorem~\ref{thm:generaltrans}, we obtain an asymptotically good family of quantum CSS codes supporting a transversal $CCZ$ gate. \cite{krishna_towards_2019} instantiated this construction by letting $C=\evl{\bF_q}(\bF_q[X]^{<\ell})$ be a Reed-Solomon code (see Definition~\ref{def:polynomials}), which immediately yields the following:

\begin{corollary}[\cite{krishna_towards_2019}]
  \label{cor:ktconstruct}
  For every prime power $q$ and every $k,\ell\in\bN$ such that $k<\ell\leq q/2$ and $3(\ell-1)<n:=q-k$, there exists an $[[n,\; k,\; d=\ell+1-k]]_q$ CSS code that supports a transversal $CCZ_q$ gate.
\end{corollary}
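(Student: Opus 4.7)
The plan is to instantiate Theorem~\ref{thm:generaltrans} with $C$ taken to be the Reed-Solomon code $\evl_{\bF_q}(\bF_q[X]^{<\ell})$, which has parameters $[q,\ell,q-\ell+1]_q$. The first step is to identify the relevant parameters of $C^\perp$ and $C^{*3}$ as Reed-Solomon codes. By the standard fact noted just after Definition~\ref{def:polynomials}, $C^\perp=\evl_{\bF_q}(\bF_q[X]^{<q-\ell})$, so its distance is $d^\perp=\ell+1$. For the triple product code, a product of three polynomials each of degree $<\ell$ has degree $<3\ell-2$, and conversely every monomial $X^m$ with $0\leq m\leq 3\ell-3$ can be written as a product $X^{i_1}X^{i_2}X^{i_3}$ with $0\leq i_j<\ell$; hence $C^{*3}=\evl_{\bF_q}(\bF_q[X]^{<3\ell-2})$. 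Provided $3\ell-2\leq q$, which follows from the hypothesis $3(\ell-1)<n=q-k$, this code has distance $d'=q-3\ell+3$.

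Next I would verify the hypothesis $k<\min\{\ell,d,d',d^\perp\}$ of Theorem~\ref{thm:generaltrans}. The inequality $k<\ell$ is given directly, and $k<\ell+1=d^\perp$ is then immediate. The bound $k<d=q-\ell+1$ rearranges to $k+\ell\leq q$, which follows since $k<\ell\leq q/2$ yields $k+\ell<2\ell\leq q$. Finally, $k<d'=q-3\ell+3$ is precisely a restatement of the hypothesis $3(\ell-1)<n$.

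With these checks in hand, Theorem~\ref{thm:generaltrans} produces an $[[q-k,\,k,\,\geq\min\{d,d^\perp\}-k]]_q$ CSS code supporting a transversal $CCZ_q$ gate. Since $\ell\leq q/2$ forces $d^\perp=\ell+1\leq q-\ell+1=d$, the distance lower bound simplifies to $\ell+1-k$, and the length simplifies to $n=q-k$, matching the statement.

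I do not foresee any significant obstacle; the proof is essentially a bookkeeping exercise verifying that the Reed-Solomon specifications land inside the parameter regime required by Theorem~\ref{thm:generaltrans}. The only step requiring mild care is the identification of $C^{*3}$ as a Reed-Solomon code of degree $<3\ell-2$ together with the observation that $3\ell-2\leq q$ (so the evaluation map defining $C^{*3}$ has room to detect low-weight errors), which as noted follows from the stated inequality $3(\ell-1)<n$.
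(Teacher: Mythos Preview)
Your proposal is correct and follows exactly the approach the paper indicates: instantiate Theorem~\ref{thm:generaltrans} with the Reed-Solomon code $C=\evl_{\bF_q}(\bF_q[X]^{<\ell})$ and check the parameter constraints, which is all the paper says (``which immediately yields the following''). Your write-up actually supplies more detail than the paper does; the only minor point is that Theorem~\ref{thm:generaltrans} yields $d\geq\ell+1-k$ rather than equality, but this matches how the corollary is used downstream.
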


We now show how to reduce the alphabet size by instead letting $C$ be an AG code. We first state a known result providing the desired AG code constructions. The reader is referred to \cite{hoholdt_algebraic_1998} for background on these codes.

\begin{theorem}[Well known; see e.g.~\cite{hoholdt_algebraic_1998}]
  \label{thm:agcodes}
  For every square prime power $q$ and every $L\in\bR$ satisfying
  \begin{equation}
    \label{eq:Lconstraint}
    \frac{1}{\sqrt{q}-1} < L < \frac13-\frac{1}{\sqrt{q}-1},
  \end{equation}
  there exists an infinite family $\cC$ of classical codes over the alphabet $\bF_q$ such that every $C\in\cC$ satisfies the following: if $C$ is an $[n,\ell,d]_q$ code for which $C^{*3}$ has distance $d'$ and $C^\perp$ has distance $d^\perp$, then
  \begin{align}
    \label{eq:agbounds}
    \begin{split}
      \ell/n &\geq L-o(1) \\
      d/n &\geq 1-L-\frac{1}{\sqrt{q}-1}-o(1) \\
      d'/n &\geq 1-3\left(L+\frac{1}{\sqrt{q}-1}\right)-o(1) \\
      d^\perp/n &\geq L-\frac{1}{\sqrt{q}-1}-o(1),
    \end{split}
  \end{align}
  where $o(1)$ denotes some function of $n$ that approaches $0$ as $n\rightarrow\infty$.
\end{theorem}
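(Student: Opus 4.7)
The plan is to invoke the standard machinery of one-point (or more generally, divisorial) algebraic-geometric codes and the Garcia--Stichtenoth tower, and then simply keep track of parameters.

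First I would fix, for each $n$ in an infinite sequence, a smooth projective curve $X$ over $\bF_q$ (with $q$ a square) coming from an asymptotically optimal tower, having $N \geq n$ rational points and genus $g$ satisfying $g/n \leq 1/(\sqrt q - 1) + o(1)$; this is the Drinfeld--Vladut-achieving Garcia--Stichtenoth construction. Let $P_1,\dots,P_n$ be distinct rational points of $X$, set $D = P_1 + \cdots + P_n$, and choose a divisor $G$ of degree $m$ whose support is disjoint from $D$ with $m/n = L + 1/(\sqrt q - 1)$ (rounded to an integer, introducing only $o(1)$ losses). The code $C = C_{\mathcal L}(D,G) \subseteq \bF_q^n$ is defined by evaluating the Riemann--Roch space $\mathcal L(G)$ at $P_1,\dots,P_n$.

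Next I would read off the four parameters using textbook AG-code facts:
\begin{enumerate}[label=(\roman*)]
\item By Riemann--Roch, $\ell = \dim C = \dim\mathcal L(G) \geq m - g + 1$, which gives $\ell/n \geq L + 1/(\sqrt q - 1) - g/n + o(1) \geq L - o(1)$.
\item Since evaluation is injective on $\mathcal L(G)$ when $m < n$, and any nonzero codeword vanishes on at most $m$ points, $d \geq n - m$, giving $d/n \geq 1 - L - 1/(\sqrt q - 1) - o(1)$.
\item The dual $C^\perp$ is itself an AG code of the form $C_{\mathcal L}(D, D - G + K)$ for a suitable canonical divisor $K$ (of degree $2g - 2$), so its distance is at least $\deg(G) - (2g - 2) = m - 2g + 2$, yielding $d^\perp/n \geq L - 1/(\sqrt q - 1) - o(1)$.
\item For the triple star product, use the containment $C_{\mathcal L}(D, G_1) \ast C_{\mathcal L}(D, G_2) \subseteq C_{\mathcal L}(D, G_1 + G_2)$, which iterates to $C^{\ast 3} \subseteq C_{\mathcal L}(D, 3G)$; since $\deg(3G) = 3m < n$ under the hypothesis $L < 1/3 - 1/(\sqrt q - 1)$, this gives $d' \geq n - 3m$, i.e.\ $d'/n \geq 1 - 3L - 3/(\sqrt q - 1) - o(1)$.
\end{enumerate}
The constraints in~(\ref{eq:Lconstraint}) are exactly what is needed to make both $d^\perp$ and $d'$ positive, and to ensure $m < n$ so that evaluation is well-behaved.

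The main obstacle is not really a mathematical one but a matter of invoking the right black boxes: the existence of the Garcia--Stichtenoth tower, the computation of the dual of an AG code as an AG code (which in turn uses the residue/differential formulation), and the containment of star products in codes associated to sums of divisors. All three are classical, so I would cite \cite{hoholdt_algebraic_1998} for each. The only bookkeeping subtlety is that the four inequalities in~(\ref{eq:agbounds}) should hold simultaneously, for the \emph{same} choice of $m$; this is automatic because all four bounds depend linearly on $m/n$ and on $g/n$, and the choice $m/n = L + 1/(\sqrt q - 1)$ together with $g/n \to 1/(\sqrt q - 1)$ yields all four estimates in parallel.
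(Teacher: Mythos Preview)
Your proposal is correct and essentially identical to the paper's proof: both invoke the asymptotically optimal (Garcia--Stichtenoth/Drinfeld--Vl\u{a}du\c{t}--achieving) tower, take $C=C_{\mathcal L}(D,G)$ with $\deg(G)/n\approx L+1/(\sqrt{q}-1)$, and read off the four bounds from Riemann--Roch, the Goppa distance bound, the dual-as-AG-code formula, and the containment $C^{*3}\subseteq C_{\mathcal L}(D,3G)$. The only cosmetic difference is that the paper parametrizes by the target dimension $\ell=\lceil Ln\rceil$ and sets $\deg(G)=\ell+g-1$ (so that Riemann--Roch gives dimension exactly $\ell$), whereas you parametrize directly by $m=\deg(G)$; asymptotically these choices coincide.
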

\begin{proof}
  The codes $C$ we choose are simply the codes $C(D,G)$ described in Theorem~2.65 of \cite{hoholdt_algebraic_1998}, instantiated with the asymptotically good curves described in Theorem~2.80 of \cite{hoholdt_algebraic_1998}. Specifically, for an infinite sequence of pairs $(n,g)$ with $g\leq n/(\sqrt{q}-1)+o(n)$, these curves have genus $g$ and have $n$ $\bF_q$-rational points. So we then instantiate $C=C(D,G)$ in Theorem~2.65 of \cite{hoholdt_algebraic_1998} with these curves, and by letting $\ell=\lceil Ln\rceil$ and choosing $G$ with $\deg(G)=\ell+g-1$. By~(\ref{eq:Lconstraint}), for every $n$ sufficiently large these choices satisfy the constraint $2g-2<\deg(G)<n$ on p.27 of \cite{hoholdt_algebraic_1998}, so their Theorem~2.65 implies that $C$ has dimension $\ell$ and distance $d\geq n-\deg(G)=n-\ell-g+1$. These inequalities imply the first two bounds in~(\ref{eq:agbounds}). By definition $C^{*3}\subseteq C(D,3G)$, and~(\ref{eq:Lconstraint}) also ensures that $2g-2<\deg(3G)=3\deg(G)<n$, so Theorem~2.65 of \cite{hoholdt_algebraic_1998} implies that $C^{*3}$ has distance $d'\geq n-3\deg(G)=n-3(\ell+g-1)$, which yields the third bound in~(\ref{eq:agbounds}). Meanwhile, Theorem~2.69 and Theorem~2.71 of \cite{hoholdt_algebraic_1998} imply that $C^\perp$ has distance $d^\perp\geq\deg(G)-2g+2=\ell-g+1$, yielding the fourth bound in~(\ref{eq:agbounds}).
\end{proof}

Applying the codes in Theorem~\ref{thm:agcodes} to Theorem~\ref{thm:generaltrans} immediately yields the following corollary. Note that for simplicity we do not try to optimize constants in the statement below.

\begin{corollary}
  \label{cor:transAG}
  For every square prime power $q\geq 64$, there exists an infinite subset $\cN\subseteq\bN$ such that for every $n\in\cN$, there is a $[[N\leq n,\; K\geq n/100,\; D\geq n/100]]_q$ quantum CSS code that supports a transversal $CCZ_q$ gate.
\end{corollary}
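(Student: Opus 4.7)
The plan is to feed the AG codes from Theorem~\ref{thm:agcodes} directly into Theorem~\ref{thm:generaltrans}, with a single rate parameter $L$ that works uniformly for every square prime power $q\geq 64$ and with a dimension $k$ linear in $n$.

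First I would choose $L=1/6$. For any square prime power $q\geq 64$ we have $\sqrt{q}-1\geq 7$, so $1/(\sqrt{q}-1)\leq 1/7<1/6<4/21\leq 1/3-1/(\sqrt{q}-1)$, and hence $L=1/6$ satisfies the admissibility constraint~(\ref{eq:Lconstraint}) for every such $q$. Applying Theorem~\ref{thm:agcodes} with this $L$ then produces, for an infinite family of block lengths $n$, a classical $[n,\ell,d]_q$ code $C$ whose triple product $C^{*3}$ has distance $d'$ and whose dual $C^\perp$ has distance $d^\perp$, with the four relative parameters $\ell/n$, $d/n$, $d'/n$, $d^\perp/n$ at least
\begin{align*}
  1/6-o(1),\quad 29/42-o(1),\quad 1/14-o(1),\quad 1/42-o(1),
\end{align*}
respectively, where I have used $1/(\sqrt{q}-1)\leq 1/7$ to make the bounds uniform in $q$.

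Next I would apply Theorem~\ref{thm:generaltrans} with $k=\lceil n/100\rceil$. For $n$ sufficiently large in the AG family, each of $\ell,d,d',d^\perp$ exceeds $n/50$, so the hypothesis $k<\min\{\ell,d,d',d^\perp\}$ holds. Theorem~\ref{thm:generaltrans} then returns an $[[n-k,\;k,\;\geq\min\{d,d^\perp\}-k]]_q$ CSS code supporting a transversal $CCZ_q$ gate. The output parameters evaluate to $N=n-k\leq n$, $K=k\geq n/100$, and $D\geq n/42-n/100-o(n)$; since $1/42-2/100>0$, the bound $D\geq n/100$ holds once $n$ is sufficiently large. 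The collection of such block lengths forms the desired infinite set $\cN$.

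I do not anticipate any substantive obstacle: the only verification required is the numerical compatibility of the four distance bounds from Theorem~\ref{thm:agcodes} with a common choice $k=\Theta(n)$ and with the output threshold $n/100$. The bottleneck is $d^\perp$, whose rate $L-1/(\sqrt{q}-1)$ is as small as $1/42$ for $q=64$; the choice $L=1/6$ nevertheless leaves enough slack in every inequality to absorb both the $o(1)$ error terms and the reduction in distance caused by puncturing a size-$k$ information set.
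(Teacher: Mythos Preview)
Your proposal is correct and follows essentially the same approach as the paper: choose $L=1/6$, verify that~(\ref{eq:Lconstraint}) holds because $\sqrt{q}-1\geq 7$, set $k=\lceil n/100\rceil$, and check that the resulting numerical inequalities (with bottleneck $d^\perp/n\geq 1/42-o(1)$) leave enough room for both the hypothesis of Theorem~\ref{thm:generaltrans} and the distance bound $D\geq n/100$. The only difference is that you compute each of the four rates explicitly, whereas the paper just records their common lower bound $1/42-o(1)$.
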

\begin{proof}
  Let $\cC$ be the family of classical codes over $\bF_q$ in Theorem~\ref{thm:agcodes} with $L=1/6$. Note that by assumption $\sqrt{q}-1\geq 7$, so~(\ref{eq:Lconstraint}) in Theorem~\ref{thm:agcodes} indeed holds. The desired quantum codes are then given by the quantum codes of dimension $K=\lceil n/100\rceil$ in Theorem~\ref{thm:generaltrans} with $C$ chosen from the family $\cC$. By Theorem~\ref{thm:generaltrans}, we only need to show that for every sufficiently large block length $n$ of $C\in\cC$, it holds that
  \begin{equation}
    \label{eq:checkcond1}
    K<\min\{\ell,d,d',d^\perp\}
  \end{equation}
  and that
  \begin{equation}
    \label{eq:checkcond2}
    \min\{d,d^\perp\}-K\geq n/100.
  \end{equation}
  The inequality~(\ref{eq:checkcond1}) holds because $\sqrt{q}-1\geq 7$, so by Theorem~\ref{thm:agcodes} we have $\min\{\ell,d,d',d^\perp\}/n\geq 1/42-o(1)$, which is greater than $K=\lceil n/100\rceil$ for every sufficiently large $n$. Meanwhile, the inequality~(\ref{eq:checkcond2}) holds because again by Theorem~\ref{thm:agcodes} we have $\min\{d,d^\perp\}/n\geq 1/42-o(1)$, which is greater than $K+n/100<n/45$ for every sufficiently large $n$.
\end{proof}

\section{Multiplication-Friendly Codes}
\label{sec:multfriend}
In this section, we describe multiplication-friendly codes, which have previously been used classically (see e.g.~\cite{guruswami_efficiently_2017}). We introduce a quantum analogue of such codes, for which we present multiple constructions. Specifically, we show that classical multiplication-friendly codes can be used to obtain quantum ones with poor distance, and we also present a more inherently quantum construction with better distance but with more restrictions on other parameters.

\begin{definition}
  \label{def:multfriendclass}
  Let $q$ be a prime power and let $m,n,k\in\bN$. For $h\in[m]$ let $C^{(h)}$ be a $[n,k]_q$ classical code with encoding function $\Enc^{(h)}:\bF_{q^k}\rightarrow\bF_q^n$. We say that the collection $(C^{(h)};\Enc^{(h)})_{h\in[m]}$ is \textbf{$m$-multiplication-friendly} if there exists an $\bF_q$-linear function $\Dec:\bF_q^n\rightarrow\bF_{q^k}$ such that for every $z_1,\dots,z_m\in\bF_{q^k}$, it holds that
  \begin{equation*}
    z_1z_2\cdots z_m = \Dec(\Enc^{(1)}(z_1)*\Enc^{(2)}(z_2)*\cdots*\Enc^{(m)}(z_m))
  \end{equation*}
  If $(C^{(h)};\Enc^{(h)})$ is the same for all $h$, we say that this code is $m$-multiplication-friendly.
\end{definition}

Below, we introduce a quantum analogue of Definition~\ref{def:multfriendclass}.

\begin{definition}
  \label{def:multfriend}
  Let $q$ be a prime power and let $m,n,k\in\bN$. For $h\in[m]$ let $Q^{(h)}=\CSS(Q^{(h)}_X,Q^{(h)}_Z;\Enc^{(h)}_Z)$ be a $[[n,k]]_q$ quantum CSS code with $Z$-encoding function $\Enc^{(h)}_Z:\bF_{q^k}\rightarrow\bF_q^n$. We say that the collection $(Q^{(h)})_{h\in[m]}$ is \textbf{$m$-multiplication-friendly} if there exists a $\bF_q$-linear function $\Dec_Z:\bF_q^n\rightarrow\bF_{q^k}$ such that for every $z_1,\dots,z_m\in\bF_{q^k}$, it holds for every $z_1'\in\Enc^{(1)}_Z(z_1),\dots,z_m'\in\Enc^{(m)}_Z(z_m)$ that
  \begin{equation}
    \label{eq:mfdec}
    z_1z_2\cdots z_m = \Dec(z_1'*z_2'*\cdots*z_m')
  \end{equation}
  If $Q^{(h)}$ is the same for all $h$, we say that this code is $m$-multiplication-friendly.
\end{definition}

The following lemma shows that classical multiplication-friendly codes can trivially be interpreted as quantum ones, albeit with no nontrivial distance bound.

\begin{lemma}
  \label{lem:ctoqMF}
  Let $(C^{(h)};\Enc^{(h)}:\bF_{q^k}\rightarrow\bF_q^n)_{h\in[m]}$ be an $m$-multiplication-friendly collection of $[n,k]_q$ classical codes. Then $(Q^{(h)}:=\CSS(\bF_q^n,C^{(h)};\Enc^{(h)}_Z=\Enc^{(h)}))_{h\in[m]}$ is an $m$-multiplication-friendly collection of $[[n,k]]_q$ quantum CSS codes.
\end{lemma}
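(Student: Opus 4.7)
The plan is to verify two things: first, that each $Q^{(h)} = \CSS(\bF_q^n, C^{(h)}; \Enc^{(h)})$ is a well-defined $[[n,k]]_q$ CSS code, and second, that the classical decoder $\Dec$ witnessing the multiplication-friendly property of $(C^{(h)}; \Enc^{(h)})_{h \in [m]}$ also witnesses the quantum property of $(Q^{(h)})_{h \in [m]}$. The crux of the argument is the observation that taking $Q_X = \bF_q^n$ trivializes the coset structure, so the quantum definition collapses to the classical one.

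First I would check that $Q^{(h)}$ is a valid CSS code. Since $Q_X = \bF_q^n$, its dual is $Q_X^\perp = \{0\}$, which is contained in $Q_Z = C^{(h)}$, and the dimension is $\dim(Q_Z) - \dim(Q_X^\perp) = k - 0 = k$, so we indeed have an $[[n,k]]_q$ CSS code. The quotient $Q_Z / Q_X^\perp = C^{(h)}/\{0\}$ is canonically identified with $C^{(h)}$, so the classical encoding $\Enc^{(h)} : \bF_{q^k} \xrightarrow{\sim} C^{(h)}$ directly furnishes a valid $Z$-encoding isomorphism $\Enc^{(h)}_Z : \bF_{q^k} \xrightarrow{\sim} Q_Z/Q_X^\perp$.

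Next I would verify the multiplication-friendly property. Because $Q_X^\perp = \{0\}$, each coset $\Enc^{(h)}_Z(z_h)$ is a singleton, namely $\{\Enc^{(h)}(z_h)\}$. Hence the only choice for the representative $z_h' \in \Enc^{(h)}_Z(z_h)$ is $z_h' = \Enc^{(h)}(z_h)$. Taking $\Dec$ to be the same $\bF_q$-linear map that witnesses the classical multiplication-friendly property in Definition~\ref{def:multfriendclass}, we then have
\begin{equation*}
  \Dec(z_1' * z_2' * \cdots * z_m') = \Dec(\Enc^{(1)}(z_1) * \cdots * \Enc^{(m)}(z_m)) = z_1 z_2 \cdots z_m,
\end{equation*}
which is precisely the identity~(\ref{eq:mfdec}) required by Definition~\ref{def:multfriend}.

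There is essentially no obstacle here; the lemma is a packaging statement that records the trivial embedding of classical multiplication-friendly codes into the quantum framework. The only thing to flag is that, as the lemma statement itself notes, this embedding provides no nontrivial quantum distance, since $Q_X = \bF_q^n$ forces the $X$-distance to be $0$ (equivalently, there are no nonzero logical $X$ operators with nontrivial code action beyond the stabilizer, but every single-qudit $X$ error is already a logical operator). This is why an inherently quantum construction is still needed, and is deferred to Appendix~\ref{sec:qmultfriend}.
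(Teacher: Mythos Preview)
Your proof is correct and is simply a spelled-out version of the paper's one-line proof, which just cites Definitions~\ref{def:multfriendclass} and~\ref{def:multfriend}; the key observation that $Q_X^\perp=\{0\}$ makes each coset a singleton is exactly what ``follows directly'' means here. One small inaccuracy in your closing commentary: the resulting quantum distance is $1$ (the trivial bound), not $0$, since weight-$1$ vectors lie in $Q_X\setminus Q_Z^\perp$; but this remark is not part of the proof proper.
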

\begin{proof}
  The result follows directly by Definition~\ref{def:multfriendclass} and Definition~\ref{def:multfriend}.
\end{proof}

We now describe two instantiations of classical multiplication-friendly codes. The first is simply the construction of \cite{guruswami_efficiently_2017}, but we include a proof for completeness.

\begin{lemma}[Follows from \cite{guruswami_efficiently_2017}]
  \label{lem:mfclass}
  For every prime power $q$ and every $n,k,m\in\bN$ such that $m(k-1)<n\leq q$, there exists an $m$-multiplication-friendly $[n,k]_q$ classical code.
\end{lemma}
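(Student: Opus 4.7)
The natural candidate is a Reed-Solomon code, with the element-of-$\bF_{q^k}$ interpretation coming from the identification $\bF_{q^k}\cong\bF_q[X]/(p(X))$ for a fixed irreducible $p(X)\in\bF_q[X]$ of degree $k$. First I would pick any subset $A\subseteq\bF_q$ with $|A|=n$ (possible since $n\leq q$), and set $C=\evl_A(\bF_q[X]^{<k})$, which is a $[n,k]_q$ Reed-Solomon code by Definition~\ref{def:polynomials}. The encoding function is the composition
\begin{equation*}
  \Enc\colon\bF_{q^k}\xrightarrow{\sim}\bF_q[X]^{<k}\xrightarrow{\evl_A}C\subseteq\bF_q^n,
\end{equation*}
where the first arrow sends the class $z=f(X)+(p(X))$ (with $\deg f<k$) to $f$. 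This is $\bF_q$-linear.

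The decoding function $\Dec\colon\bF_q^n\to\bF_{q^k}$ is defined as follows: given $v\in\bF_q^n$, since $n\leq q$ the evaluation map $\evl_A\colon\bF_q[X]^{<n}\to\bF_q^n$ is an $\bF_q$-linear isomorphism, so there is a unique polynomial $g\in\bF_q[X]^{<n}$ with $\evl_A(g)=v$; set $\Dec(v):=g(X)+(p(X))\in\bF_q[X]/(p(X))\cong\bF_{q^k}$. Both steps are $\bF_q$-linear, so $\Dec$ is $\bF_q$-linear.

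To verify the multiplication-friendly property, fix $z_1,\dots,z_m\in\bF_{q^k}$ with representatives $f_1,\dots,f_m\in\bF_q[X]^{<k}$. Then $\Enc(z_1)*\cdots*\Enc(z_m)=\evl_A(f_1f_2\cdots f_m)$ by the standard property of evaluation commuting with multiplication. The degree bound $\deg(f_1\cdots f_m)\leq m(k-1)<n$ ensures that $f_1\cdots f_m$ is the unique polynomial of degree $<n$ interpolating these evaluations, so $\Dec(\Enc(z_1)*\cdots*\Enc(z_m))=f_1f_2\cdots f_m+(p(X))=z_1z_2\cdots z_m$ in $\bF_{q^k}$, as required by Definition~\ref{def:multfriendclass}.

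The only subtlety is checking that interpolation yields genuinely the product polynomial (rather than some other coset representative), but this is exactly what the hypothesis $m(k-1)<n$ buys us: the product lies in $\bF_q[X]^{<n}$, where $\evl_A$ is an isomorphism. Since the same encoding is used for each $h\in[m]$, the resulting code is $m$-multiplication-friendly in the single-code sense of Definition~\ref{def:multfriendclass}.
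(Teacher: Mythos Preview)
Your proposal is correct and essentially identical to the paper's proof: both fix an irreducible degree-$k$ polynomial to identify $\bF_{q^k}$ with $\bF_q[X]^{<k}$, use the Reed-Solomon encoding $\evl_A$ for some $A\subseteq\bF_q$ of size $n$, and define $\Dec$ by interpolating the unique degree-$<n$ polynomial and reducing modulo the irreducible, with the key degree bound $m(k-1)<n$ guaranteeing that interpolation recovers the true product polynomial.
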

\begin{proof}
  We define the desired code $C$ as the image of the encoding function $\Enc:\bF_{q^k}\rightarrow\bF_q^n$ described below. Fix an irreducible degree-$k$ polynomial $\gamma(X)\in\bF_q[X]$, so that $\bF_{q^k}=\bF_q[X]/(\gamma(X))$, and fix some subset $A\subseteq\bF_q$ of size $|A|=n$ (which is possible by the assumption that $n\leq q$). Then for $f(X)\in\bF_q[X]^{<k}\cong\bF_q[X]/(\gamma(X))=\bF_{q^k}$, define $\Enc(f)=\evl_A(f)$.

  Given $z_1,\dots,z_m\in\bF_q[X]^{<k}\cong\bF_q[X]/(\gamma(X))=\bF_{q^k}$, then letting $z_h'=\Enc(z_h)=\evl_A(z_h)$, we see that the product $z_1z_2\cdots z_m$ over $\bF_{q^k}$ corresponds to the quotient of the product $z_1z_2\cdots z_m$ over $\bF_q[X]$ modulo $\gamma(X)$. But this latter product (before quotienting by $\gamma(X)$) is a polynomial of degree $\leq m(k-1)$, and is therefore the unique such polynomial that agrees with $z_1'*z_2'*\cdots*z_m'$ at all points in $A$, as $|A|=n>m(k-1)$ by assumption.

  Thus let $\Dec:\bF_q^A\rightarrow\bF_{q^k}$ be the function defined as follows: given an input $z\in\bF_q^A$, let $f(X)\in\bF_q[X]$ be the unique polynomial of degree $<n$ that agrees with $z$ on every point in $A$, and define $\Dec(z)=f(X)/(\gamma(X))$, where we again use the equivalence $\bF_q[X]/(\gamma(X))=\bF_{q^k}$. By construction $\Dec$ is $\bF_q$-linear, and by the reasoning above we always have $z_1z_2\cdots z_m=\Dec_Z(z_1'*z_2'*\cdots*z_m')$, so $C=\im(\Enc)$ is $m$-multiplication-friendly, as desired.
\end{proof}

A disadvantage of the construction in Lemma~\ref{lem:mfclass} is that if $q=2$ and $m>1$, then we cannot satisfy the inquality $m(k-1)<q$. Thus this construction will not give multiplication-friendly codes over a binary alphabet. The following construction remedies this issue by using Reed-Muller codes instead of Reed-Solomon codes.

\begin{lemma}
  \label{lem:mfRM}
  For every prime power $q$ and every $k,m\in\bN$, there exists a collection $(C^{(h)};\Enc^{(h)})_{h\in[m]}$ of $m$-multiplication-friendly $[n:=q^{m(k-1)},\;k]_q$ classical codes.
\end{lemma}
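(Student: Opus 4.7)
The plan is to generalize the Reed--Solomon construction of Lemma~\ref{lem:mfclass} by replacing single-variable evaluation with a multilinear ``one variable group per code'' construction, so that the restrictive bound $m(k-1) < q$ is removed. The key idea is to evaluate each encoding over all of $\bF_q^{m(k-1)}$, but to let the $h$-th encoding depend (affinely) on only the $h$-th block of $k-1$ coordinates; then the pointwise product of $m$ encodings becomes a multilinear polynomial whose monomials directly expose the coefficients of the product $z_1 z_2 \cdots z_m$ in $\bF_{q^k}\cong\bF_q[X]/(\gamma(X))$.

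Concretely, I would fix an irreducible polynomial $\gamma(X)\in\bF_q[X]$ of degree $k$ and identify $\bF_{q^k}\cong\bF_q[X]/(\gamma(X))$, so that each element is uniquely represented as $f=\sum_{i=0}^{k-1} c_i X^i$. Index the $n=q^{m(k-1)}$ code coordinates by tuples $(\vec y^{(1)},\dots,\vec y^{(m)})\in(\bF_q^{k-1})^m$, and for each $h\in[m]$ set
\[
  \Enc^{(h)}(f)[\vec y^{(1)},\dots,\vec y^{(m)}] := c_0+\sum_{i=1}^{k-1} c_i\, y^{(h)}_i .
\]
This map is clearly $\bF_q$-linear. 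It is also injective: the right-hand side is an affine-linear function of $\vec y^{(h)}$, and such a function on $\bF_q^{k-1}$ vanishes identically only when $c_0=\cdots=c_{k-1}=0$. So $C^{(h)}:=\im(\Enc^{(h)})$ is an $[n,k]_q$ code, as required.

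For multiplication-friendliness, write $z_h=\sum_{i=0}^{k-1} c^{(h)}_i X^i$ and set $y^{(h)}_0:=1$. Expanding the pointwise product,
\[
  \prod_{h=1}^{m}\Enc^{(h)}(z_h)[\vec y]
  \;=\;\sum_{(i_1,\dots,i_m)\in\{0,\dots,k-1\}^m}\Bigl(\prod_{h=1}^m c^{(h)}_{i_h}\Bigr)\prod_{h=1}^{m} y^{(h)}_{i_h}.
\]
The $k^m$ monomials on the right are multilinear and use at most one variable from each block, so they are linearly independent as functions on $\bF_q^{m(k-1)}$; in particular there is an $\bF_q$-linear interpolation map that reads off the coefficients $C_{(i_1,\dots,i_m)}:=\prod_h c^{(h)}_{i_h}$ from any input vector in $\bF_q^n$. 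I then define $\Dec:\bF_q^n\to\bF_{q^k}$ as the composition of this extraction with the $\bF_q$-linear map $(C_{(i_1,\dots,i_m)})\mapsto\sum_{(i_1,\dots,i_m)} C_{(i_1,\dots,i_m)}\cdot X^{i_1+\cdots+i_m}\bmod\gamma(X)$.

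The verification that~(\ref{eq:mfdec}) holds is then immediate: the product $z_1z_2\cdots z_m$ in $\bF_q[X]$ expands as $\sum_{(i_1,\dots,i_m)}(\prod_h c^{(h)}_{i_h})X^{i_1+\cdots+i_m}$, and reducing modulo $\gamma(X)$ gives the product in $\bF_{q^k}$, exactly matching the output of $\Dec$. There is no real obstacle here; the only subtlety is confirming the linear independence of the monomials $\prod_h y^{(h)}_{i_h}$ on $\bF_q^{m(k-1)}$, which is standard because each such monomial is degree $\leq 1$ in every variable, and multilinear monomials on $\bF_q^s$ (with $q\geq 2$) are always linearly independent. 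Note that we have made no use of single-variable evaluation points, so there is no constraint like $m(k-1)<q$---the construction works for every prime power $q$ and every $k,m\in\bN$.
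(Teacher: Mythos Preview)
Your proposal is correct and takes essentially the same approach as the paper: the encoding maps $\Enc^{(h)}(f)=c_0+\sum_i c_i\,y^{(h)}_i$ are identical to the paper's $\evl_{\bF_q^{m(k-1)}}(F^{(h)}(f))$, and both arguments hinge on the fact that the componentwise product is the evaluation of a multilinear polynomial, hence uniquely interpolable. The only cosmetic difference is that the paper packages $\Dec$ via the quotient $R\to R/I\cong\bF_{q^k}$ (with $I=(\gamma(X_{1,1}),\,X_{h,j}-X_{1,1}^j)$), whereas you extract the monomial coefficients directly and then reduce $\sum C_{(i_1,\dots,i_m)}X^{i_1+\cdots+i_m}\bmod\gamma$; these compute the same linear map.
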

\begin{proof}
  If $k=1$ then we can simply set every $C^{(h)}$ to be a length-$n$ repetition code, so assume that $k\geq 2$.
  
  We define the desired codes $C^{(h)}$ as the image of the encoding functions $\Enc^{(h)}:\bF_{q^k}\rightarrow\bF_q^n$ described below. Fix an irreducible degree-$k$ polynomial $\gamma(X)\in\bF_q[X]$, so that $\bF_{q^k}=\bF_q[X]/(\gamma(X))$. Let
  \begin{equation*}
    R = \bF_q[(X_{h,j})_{h\in[m],j\in[k-1]}]
  \end{equation*}
  be a ring of $m(k-1)$-variate polynomials over $\bF_q$, and define an ideal $I\subseteq R$ by
  \begin{equation*}
    I=(\gamma(X_{1,1}),(X_{h,j}-X_{1,1}^{j})_{h\in[m],j\in[k-1]}).
  \end{equation*}
  Observe that letting $X=X_{1,1}$ then $R/I\cong\bF_q[X]/(\gamma(X))=\bF_{q^k}$, as in $R/I$ we have that each $X_{h,j}=X_{1,1}^{j}$ is simply a power of $X_{1,1}$.

  Now for $h\in[m]$ and $f(X)=\sum_{j=0}^{k-1}f_{j}X^{j}\in\bF_q[X]^{<k}\cong\bF_q[X]/(\gamma(X))=\bF_{q^k}$, we define a degree-1 polynomial $F^{(h)}(f)\in R$ by
  \begin{equation*}
    F^{(h)}(f) = f_0+\sum_{j\in[k-1]}f_{j}X_{h,j},
  \end{equation*}
  and we let
  \begin{equation*}
    \Enc^{(h)}(f) = \evl_{\bF_q^{m(k-1)}}(F^{(h)}(f)).
  \end{equation*}
  Observe that the image of $F^{(h)}(f)$ under the quotient $R\rightarrow R/I$ is by definition $f(X_{1,1})$.

  We now show that the code collection $(C^{(h)}=\im(\Enc^{(h)}))_{h\in[m]}$ is $m$-multiplication-friendly. For this purpose, consider some $z_1,\dots,z_m\in\bF_q[X]^{<k}\cong\bF_q[X]/(\gamma(X))=\bF_{q^k}$, and for $h\in[m]$ let $z_h'=\Enc^{(h)}(z_h)=\evl_{\bF_q^{m(k-1)}}(F^{(h)}(z_h))$. By definition, the product $z_1z_2\cdots z_m$ over $\bF_{q^k}$ corresponds to the quotient of the product $z_1z_2\cdots z_m$ over $\bF_q[X]$ modulo $\gamma(X)$, which (letting $X=X_{1,1}$) equals the product $F^{(1)}(z_1)F^{(2)}(z_2)\cdots F^{(m)}(z_m)$ modulo $I$. But this latter product (before quotienting by $\gamma(X)$) is a multilinear polynomial of total degree $\leq m$, and is therefore the unique such polynomial that agrees with $z_1'*z_2'*\cdots*z_m'$ at all points in $\bF_q^{m(k-1)}$; this uniqueness follows from the fact that every nonzero multilinear polynomial (in fact, every polynomial with individual degree $<q$ for each variable) over $\bF_q$ has a nonzero evaluation point.

  Thus let $\Dec:\bF_q^{m(k-1)}\rightarrow\bF_{q^k}$ be the function defined as follows: given an input $z\in\bF_q^{m(k-1)}$, let $f(X_{h,j})_{h\in[m],j\in[k-1]}\in R$ be the unique polynomial with individual degree $<q$ in each variable that agrees with $z$ on every point in $\bF_q^{m(k-1)}$, and define $\Dec(z)\in\bF_{q^k}$ to be the image of $f$ under the quotient $R\rightarrow R/I=\bF_q[X_{1,1}]/(\gamma(X_{1,1}))=\bF_{q^k}$. By construction $\Dec$ is $\bF_q$-linear, and by the reasoning above we always have $z_1z_2\cdots z_m=\Dec(z_1'*z_2'*\cdots*z_m')$. Thus $(C^{(h)};\Enc^{(h)})_{h\in[m]}$ is $m$-multiplication-friendly, as desired.
\end{proof}

\begin{remark}
  \label{remark:diffcodes}
  In Lemma~\ref{lem:mfclass} we are able to use the same code $(C^{(h)};\Enc^{(h)})=(C;\Enc)$ for every $h\in[m]$, whereas Lemma~\ref{lem:mfRM} uses a different code $(C^{(h)};\Enc^{(h)})$ for each $h\in[m]$. We argue below that this difference is 1.~unavoidable, yet 2.~somewhat superficial:
  \begin{enumerate}
  \item In general, Lemma~\ref{lem:mfRM} cannot be strengthened to guarantee that all $(C^{(h)};\Enc^{(h)})$ be equal. Specifically, assume for a contradiction that $(C^{(h)};\Enc^{(h)})=(C;\Enc)$ for every $h\in[m]$ in Lemma~\ref{lem:mfRM}. Then when $q=2$ and $m\geq 3$, it would follow that for every $z_1,z_2\in\bF_{2^k}$,
  \begin{equation*}
    \Enc(z_1)*\Enc(z_2)^{*m-1} = \Enc(z_1)*\Enc(z_2) = \Enc(z_1)^{*2}*\Enc(z_2)^{*m-2},
  \end{equation*}
  where the equalities above hold because $x^t=x$ over $\bF_2$ for every $t\in\bN$. Then applying $\Dec$ to both sides gives that $z_1z_2^{m-1}=z_1^2z_2^{m-2}$. However for every $k\geq 2$ we can choose $z_1,z_2\in\bF_{2^k}$ for which $z_1z_2^{m-1}\neq z_1^2z_2^{m-2}$, yielding the desired contradiction. Note that Lemma~\ref{lem:mfclass} requires $m(k-1)<q$, and hence is not subject to this counterexample.
\item While strictly speaking all $(C^{(h)};\Enc^{(h)})$ are distinct in Lemma~\ref{lem:mfRM}, they are all isomorphic. Specifically, for every $h,h'\in[m]$, by definition there is an isomorphism $(C^{(h)};\Enc^{(h)})\cong(C^{(h')};\Enc^{(h')})$ given by (for instance) permuting the indices in $[m]$ in the proof of Lemma~\ref{lem:mfRM} by a cyclic shift of length $h'-h$. It follows that the codes $(C^{(h)};\Enc^{(h)})$ and $(C^{(h')};\Enc^{(h')})$ are equivalent up to a permutation of the $n$ code components.
\end{enumerate}
\end{remark}

While the classical multiplication-friendly codes above provide quantum multiplication-friendly codes by Lemma~\ref{lem:ctoqMF}, the only bound we have on the distance $d$ of these quantum codes is the trivial bound $d\geq 1$. Such a trivial bound will be sufficient for concatenating with the AG-based codes from Corollary~\ref{cor:transAG}, as for this purpose we only need a constant-sized multiplication-friendly code. Nevertheless, in Appendix~\ref{sec:qmultfriend}, we show how to obtain quantum multiplication-friendly codes with distance growing linearly in the block length. In Appendix~\ref{sec:RSconcat} we provide a sample application of these codes, in which we obtain quantum codes with nearly linear dimension and distance that support transversal $CCZ$ gates, using only concatenation with Reed-Solomon-based codes.

\section{Alphabet Reduction Procedure for Transversal Gates}
\label{sec:alphred}
In this section, we show how to concatenate a code supporting a transversal $CCZ_{q^r}$ gate with a multiplication-friendly code to obtain a code supporting a transversal $CCZ_q$ gate over a subfield $\bF_q\subseteq\bF_{q^r}$. A similar result naturally holds for the $U$ gate, but due to Lemma~\ref{lem:CCZtoU} it will be sufficient for us to restrict attention to the $CCZ$ gate here.

\begin{proposition}
  \label{prop:alphred}
  Let $(\MF{Q}^{(h)}=\CSS(\MF{Q}^{(h)}_X,\MF{Q}^{(h)}_Z;\MF{\Enc}^{(h)}_Z:\bF_{q^{\MF{k}}}\rightarrow\bF_{q^{\MF{n}}}))_{h\in[4]}$ be a quadruple of $4$-multiplication-friendly $[[\MF{n},\MF{k},\MF{d}^{(h)}]]_q$ CSS codes, and let $(Q^{(h)}=\CSS(Q^{(h)}_X,Q^{(h)}_Z;\Enc^{(h)}_Z))_{h\in[3]}$ be a triple of $[[n,k,d^{(h)}]]_{q^{\MF{k}}}$ CSS codes supporting a transversal $CCZ_{q^{\MF{k}}}$ gate.

  Also fix some $r\in\bN$ with $3(r-1)<\MF{k}$ and $r\leq q$, fix a subset $A\subseteq\bF_q$ of size $|A|=r$, and fix an irreducible degree-$\MF{k}$ polynomial $\gamma(X)\in\bF_q[X]$, so that $\bF_{q^{\MF{k}}}=\bF_q[X]/(\gamma(X))\cong\bF_q[X]^{<\MF{k}}$.
  
  For $h\in[4]$ let $\MF{\Enc}^{(h)}_X:\bF_{q^{\MF{k}}}\rightarrow\bF_{q^{\MF{n}}}$ be an arbitrary $X$ encoding function that is compatible with $\MF{\Enc}^{(h)}_Z$ under the trace bilinear form, and for $h\in[3]$ let $\Enc^{(h)}_X$ be an arbitrary $X$ encoding function that is compatible with $\Enc^{(h)}_Z$ under the standard bilinear form (see Definition~\ref{def:concatenation}).\footnote{Such $X$ encoding functions always exist by Lemma~\ref{lem:findcompatible}.} For $h\in[3]$, define a restricted concatenated CSS code
  \begin{equation*}
    \tilde{Q}^{(h)} = \MF{Q}^{(h)}\diamond_r Q^{(h)} := (\MF{Q}^{(h)}\circ Q^{(h)})|_{(\bF_q^A)^k},
  \end{equation*}
  where for the concatenation we use the $X,Z$ encoding functions above, and for the restriction we view
  \begin{equation}
    \label{eq:FqAsubspace}
    (\bF_q^A)^k\cong(\bF_q[X]^{<r})^k\subseteq(\bF_q[X]^{<\MF{k}})^k\cong(\bF_q[X]/(\gamma(X)))^k=\bF_{q^{\MF{k}}}^k
  \end{equation}
  as a $\bF_q$-linear subspace of $\bF_{q^{\MF{k}}}^k$, such that first isomorphism above is given by the inverse of $\evl_A:\bF_q[X]^{<r}\xrightarrow{\sim}\bF_q^A$.

  Then
  \begin{equation*}
    (\tilde{Q}^{(h)})_{h\in[3]} = \CSS(\tilde{Q}^{(h)}_X,\tilde{Q}^{(h)}_Z;\;\widetilde{\Enc}^{(h)}_Z:\bF_q^A\rightarrow(\bF_q^{\MF{n}})^n)
  \end{equation*}
  is a triple of $[[\MF{n}n,\;rk,\;\geq\MF{d}^{(h)}d^{(h)}]]_q$ CSS codes that supports a transversal $CCZ_q$ gate.
\end{proposition}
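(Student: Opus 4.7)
My plan is first to check the parameters and then to construct the coefficients vector for the transversal $CCZ_q$ gate. For parameters: the length $\MF{n}n$ and $\bF_q$-dimension $rk$ are immediate from Definition~\ref{def:concatenation} and Definition~\ref{def:restriction} together with $\dim_{\bF_q}((\bF_q^A)^k)=rk$. For the distance, I would use the standard concatenation bound $d(\MF{Q}^{(h)}\circ Q^{(h)})\geq\MF{d}^{(h)}d^{(h)}$, and then observe that restricting the $Z$-encoding to a subspace $S$ only shrinks $Q_Z$ (hence enlarges $Q_Z^\perp$) while keeping $Q_X$ fixed, so the minimum weight over $(Q_X\setminus Q_Z^\perp)\cup(Q_Z\setminus Q_X^\perp)$ can only increase.

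The heart of the argument is to build a coefficients vector $a\in\bF_q^{\MF{n}n}$ by combining the outer coefficients vector $b\in\bF_{q^{\MF{k}}}^n$ (furnished by the transversal $CCZ_{q^{\MF{k}}}$ gate on $(Q^{(h)})_{h\in[3]}$) with the fourth inner code $\MF{Q}^{(4)}$, whose role is precisely to carry each $b_j$ down to the physical level. Concretely, I introduce the $\bF_q$-linear functional
\[
L:\bF_{q^{\MF{k}}}\to\bF_q,\qquad L(f)=\sum_{\alpha\in A}\hat{f}(\alpha),
\]
where $\hat f\in\bF_q[X]^{<\MF{k}}$ denotes the unique degree-$<\MF{k}$ representative of $f\in\bF_q[X]/(\gamma(X))$. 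Composing with the $\bF_q$-linear decoder $\Dec:\bF_q^{\MF{n}}\to\bF_{q^{\MF{k}}}$ supplied by $4$-multiplication-friendliness, the map $L\circ\Dec$ is $\bF_q$-linear and so equals $x\mapsto v_L\cdot x$ for some $v_L\in\bF_q^{\MF{n}}$. For each outer block $j\in[n]$ I fix an arbitrary $Y^{(4)}_j\in\MF{\Enc}^{(4)}_Z(b_j)$ and set $a_{(j,m)}:=(v_L)_m(Y^{(4)}_j)_m$ for $m\in[\MF{n}]$.

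To verify the transversal $CCZ_q$ identity, take any logical inputs $\tilde z^{(h)}\in(\bF_q^A)^k$ with polynomial representation $\tilde p^{(h)}\in(\bF_q[X]^{<r})^k$ and any physical representatives, which by Definition~\ref{def:concatenation} decompose into inner blocks $Y^{(h)}_j\in\MF{\Enc}^{(h)}_Z(y^{(h)}_j)$ of some outer representative $y^{(h)}\in\Enc^{(h)}_Z(\tilde z^{(h)})$. The physical sum $\sum_{j,m}a_{(j,m)}(Y^{(1)}_j)_m(Y^{(2)}_j)_m(Y^{(3)}_j)_m$ rewrites as $\sum_j v_L\cdot(Y^{(1)}_j*Y^{(2)}_j*Y^{(3)}_j*Y^{(4)}_j)$, which by the definition of $v_L$ and $4$-mult-friendliness equals $\sum_j L(y^{(1)}_j y^{(2)}_j y^{(3)}_j b_j)=L\bigl(\sum_j b_j y^{(1)}_j y^{(2)}_j y^{(3)}_j\bigr)$; invoking the outer transversal $CCZ_{q^{\MF{k}}}$ identity then converts this to $L\bigl(\sum_{i\in[k]}\tilde p^{(1)}_i\tilde p^{(2)}_i\tilde p^{(3)}_i\bigr)$. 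The hypothesis $3(r-1)<\MF{k}$ is crucial in the last step: it ensures each $\tilde p^{(1)}_i\tilde p^{(2)}_i\tilde p^{(3)}_i$ has polynomial degree $<\MF{k}$, so multiplication in $\bF_{q^{\MF{k}}}=\bF_q[X]/(\gamma(X))$ coincides with honest polynomial multiplication, whence $L$ evaluates the sum to $\sum_{i\in[k],\alpha\in A}\tilde p^{(1)}_i(\alpha)\tilde p^{(2)}_i(\alpha)\tilde p^{(3)}_i(\alpha)=\sum_{\ell\in[rk]}\tilde z^{(1)}_\ell\tilde z^{(2)}_\ell\tilde z^{(3)}_\ell$, matching the LHS of~(\ref{eq:supporttransCCZ}).

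The main obstacle is the careful bookkeeping across three encoding levels (physical $\bF_q$-qudits, inner-block values in $\bF_{q^{\MF{k}}}$, outer-logical polynomials in $(\bF_q[X]^{<r})^k$) and two identifications $\bF_q^A\leftrightarrow\bF_q[X]^{<r}$ and $\bF_q[X]^{<\MF{k}}\leftrightarrow\bF_{q^{\MF{k}}}$. In particular, I must check that the identity holds for \emph{all} choices of inner and outer representatives, not just particular ones: this works because both the $4$-mult-friendliness equation and the outer transversal identity are representative-independent, and only the single choice of $Y^{(4)}_j$ (baked once into the fixed coefficients vector $a$) needs to be made. The degree hypothesis $3(r-1)<\MF{k}$ plays a subtle but essential role in passing between the quotiented product in $\bF_{q^{\MF{k}}}$ and the honest polynomial product that $L$ evaluates pointwise on $A$.
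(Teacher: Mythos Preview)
Your proposal is correct and follows essentially the same route as the paper's proof: your functional $L$ is the paper's $\eta$, your $Y^{(4)}_j$ is the paper's $\MF{b}_j\in\MF{\Enc}^{(4)}_Z(b_j)$, and your coefficients $a_{(j,m)}=(v_L)_m(Y^{(4)}_j)_m$ coincide with the paper's $\tilde{b}_j$ defined via $\tilde{b}_j(z)=\eta(\MF{\Dec}_Z(\MF{b}_j*z))$. The chain of equalities (4-multiplication-friendliness, then the outer $CCZ_{q^{\MF{k}}}$ identity, then the degree bound $3(r-1)<\MF{k}$ to pass from the quotient product to the polynomial product evaluated on $A$) is identical in structure and order to the paper's computation.
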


As the statement of Proposition~\ref{prop:alphred} is slightly involved, it may be helpful for the reader to first consider the case where $r=1$ (which is always a valid choice), in which case $\bF_q^A=\bF_q$ and~(\ref{eq:FqAsubspace}) simplifies to the natural inclusion $\bF_q^k\subseteq\bF_{q^{\MF{k}}}^k$. Increasing $r$ to larger values then serves to increase the dimension of the final codes $\tilde{Q}^{(h)}$.

\begin{proof}[Proof of Proposition~\ref{prop:alphred}]
  By Definition~\ref{def:concatenation} and Definition~\ref{def:restriction}, each $\tilde{Q}^{(h)}$ is a $[[\MF{n}n,rk,\geq \MF{d}^{(h)}d^{(h)}]]_q$ code, so it suffices to show that $(\tilde{Q}^{(h)})_{h\in[3]}$ supports a transversal $CCZ_q$ gate. For this purpose, let $b\in\bF_{q^{\MF{k}}}^n$ be the coefficients vector for $(Q^{(h)})_{h\in[3]}$ given by Definition~\ref{def:supporttrans}. We will define a coefficients vector $\tilde{b}\in(\bF_q^{\MF{n}})^n$ as follows. Let $\MF{\Dec}_Z:\bF_q^{\MF{n}}\rightarrow\bF_{q^{\MF{k}}}$ be the decoding function for $(\MF{Q}^{(h)})_{h\in[4]}$ as in Definition~\ref{def:multfriend}. Define an $\bF_q$-linear function $\eta:\bF_{q^{\MF{k}}}\rightarrow\bF_q$ so that for every $f(X)\in\bF_q[X]^{<\MF{k}}\cong\bF_q[X]/(\gamma(X))=\bF_{q^{\MF{k}}}$, we let $\eta(f)=\sum_{j\in A}f(j)$.
  
  Then for every $j\in[n]$, choose some $\MF{b}_j\in\MF{\Enc}^{(4)}_Z(b_j)$, and define an $\bF_q$-linear function $\tilde{b}_j:\bF_q^{\MF{n}}\rightarrow\bF_q$ by $\tilde{b}_j(z)=\eta(\MF{\Dec}_Z(\MF{b}_j*z))$. We can naturally view $\tilde{b}_j$ as a vector in $\bF_q^{\MF{n}}$, where $\tilde{b}_j(z)=\tilde{b}_j\cdot z$. Then we define our desired coefficients vector $\tilde{b}\in(\bF_q^{\MF{n}})^n$ by letting the $j$th component be $\tilde{b}_j$ for $j\in[n]$.

  For each $h\in[3]$, consider some $z^{(h)}\in(\bF_q^A)^k$, and define $f^{(h)}(X)=(\evl_A^{-1})^{\oplus k}(z^{(h)})\in(\bF_q[X]^{<r})^k$ by taking the preimage of each of the $k$ components of $z^{(h)}$ under the isomorphism $\evl:\bF_q[X]^{<r}\xrightarrow{\sim}\bF_q^A$. Therefore using the isomorphisms in~(\ref{eq:FqAsubspace}), we may view $z^{(h)}\cong f^{(h)}$ as an element of the $\bF_q$-linear subspace $(\bF_q^A)^k\cong(\bF_q[X]^{<r})^k$ of $\bF_{q^{\MF{k}}}^k$. Also consider some $\tilde{z}^{(h)}\in\widetilde{\Enc}^{(h)}_Z(z)$, so that by Definition~\ref{def:concatenation}, for some ${z^{(h)}}'\in\Enc^{(h)}_Z(z)$ we have $\tilde{z}^{(h)}\in(\MF{\Enc}^{(h)}_Z)^{\oplus n}({z^{(h)}}')$. Then
  \begin{align*}
    \sum_{j\in[n]}\sum_{\MF{j}\in[\MF{n}]}(\tilde{b}_j)_{\MF{j}} (\tilde{z}^{(1)}_j)_{\MF{j}}(\tilde{z}^{(2)}_j)_{\MF{j}}(\tilde{z}^{(3)}_j)_{\MF{j}}
    &= \sum_{j\in[n]}\tilde{b}_j\cdot(\tilde{z}^{(1)}_j*\tilde{z}^{(2)}_j*\tilde{z}^{(3)}_j) \\
    &= \sum_{j\in[n]}\eta(\MF{\Dec}_Z(\MF{b}_j*\tilde{z}^{(1)}_j*\tilde{z}^{(2)}_j*\tilde{z}^{(3)}_j)) \\
    &= \sum_{j\in[n]}\eta(b_j{z^{(1)}_j}'{z^{(2)}_j}'{z^{(3)}_j}') \\
    &= \eta\left(\sum_{j\in[k]}f^{(1)}_jf^{(2)}_jf^{(3)}_j\right) \\
    &= \sum_{j\in[k]}\sum_{\MF{j}\in A}(z^{(1)}_j)_{\MF{j}}(z^{(2)}_j)_{\MF{j}}(z^{(3)}_j)_{\MF{j}}.
  \end{align*}
  Note that the second equality above holds by the definition of $\tilde{b}_j$, and the third equality holds by the definition of $\MF{\Dec}_Z$ for a $4$-multiplication-friendly code. The fourth equality above holds because $\eta$ is $\bF_q$-linear, along with the assumption that $Q$ supports a transversal $CCZ_{q^{\MF{k}}}$ gate with coefficients vector $b$, as well as the assumption that $f^{(1)}f^{(2)}f^{(3)}$ has degree $\leq 3(r-1)<\MF{k}=\deg(\gamma)$ so that $f^{(1)}f^{(2)}f^{(3)}\mod{\gamma}=f^{(1)}f^{(2)}f^{(3)}$. The fifth equality holds by the definition of $\eta$ along with the fact that each $z^{(h)}_j=\evl_A(f^{(h)}_j)$ by definition. Thus $\tilde{Q}$ supports a transversal $CCZ_q$ gate, as desired.
\end{proof}

\section{Main Result via Alphabet Reduction}
\label{sec:agconcat}
In this section, we obtain infinite families of asymptotically good (i.e.~$[[n,\Theta(n),\Theta(n)]]_q$) CSS codes supporting transversal $CCZ_q$ and $U_q$ gates over arbitrary constant alphabet sizes $q$. The main idea is to begin with the quantum AG-based codes from Corollary~\ref{cor:transAG}, and then reduce the alphabet size to $q$ by applying Proposition~\ref{prop:alphred}, using the multiplication-friendly codes described in Section~\ref{sec:multfriend}.

\begin{theorem}
  \label{thm:constalph}
  For every fixed prime power $q$, there exists an infinite subset $\cN\subseteq\bN$ such that for every $n\in\cN$, there exists a triple $(Q^{(1)},Q^{(2)},Q^{(3)})$ of\footnote{As we treat $q$ as a fixed constant, the hidden constants in $\Theta$ here may depend on $q$.} $[[n,\Theta(n),\Theta(n)]]_q$ quantum CSS codes supporting a transversal $CCZ_q$ gate. Furthermore, if $q\geq 5$, then for every $n\in\cN$ the associated code triple has $Q^{(1)}=Q^{(2)}=Q^{(3)}$, and therefore these codes also support a transversal $U_q$ gate.
\end{theorem}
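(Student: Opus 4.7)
The plan is to begin with an asymptotically good quantum AG-based code over a large but constant-sized alphabet $\bF_{q^K}$ supporting a transversal $CCZ_{q^K}$ gate, and then reduce its alphabet to $\bF_q$ by one or more applications of Proposition~\ref{prop:alphred}. First, I would choose $K$ so that $q^K$ is a square prime power $\geq 64$ (achievable for any prime power $q$ by taking $K$ a sufficiently large even integer). Invoking Corollary~\ref{cor:transAG} then supplies an infinite family of $[[N,K_0,D_0]]_{q^K}$ quantum CSS codes $Q$ with $K_0,D_0=\Theta(N)$ supporting a transversal $CCZ_{q^K}$ gate. I initially set $Q^{(h)}:=Q$ for all $h\in[3]$.

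For the general triple assertion, I would apply a single iteration of Proposition~\ref{prop:alphred}. The required quadruple of 4-multiplication-friendly quantum codes of extension degree $K$ over $\bF_q$ comes from combining Lemma~\ref{lem:mfRM} (a classical $[\MF{n}=q^{4(K-1)},\,K]_q$ collection) with Lemma~\ref{lem:ctoqMF} (which lifts it to a quantum collection with distance bound $\geq 1$). Choosing any constant $r\in\{1,\dots,q\}$ with $3(r-1)<K$, the hypotheses of Proposition~\ref{prop:alphred} are met, and its output is a triple of $[[\MF{n} N,\; rK_0,\; \geq D_0]]_q = [[\Theta(N),\Theta(N),\Theta(N)]]_q$ CSS codes supporting a transversal $CCZ_q$ gate. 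This establishes the first part of the theorem.

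For $q\geq 5$, I additionally need $\tilde Q^{(1)}=\tilde Q^{(2)}=\tilde Q^{(3)}$ so that Lemma~\ref{lem:CCZtoU} supplies the transversal $U_q$ gate. Since the codes in Lemma~\ref{lem:mfRM} are generally distinct across positions, I switch to Lemma~\ref{lem:mfclass}, which produces a \emph{single} multiplication-friendly code whenever $4(\MF{k}-1)<\MF{n}\leq q$. At the final reduction step (target alphabet $\bF_q$) this constraint forces $\MF{k}\leq q/4+1$, which for small $q$ precludes a one-shot reduction. The fix is to take $\MF{k}=2$ with $\MF{n}=5$ (permissible precisely because $q\geq 5$) and to reduce the alphabet in several stages: redo the first step with $K=2^S$ for the least $S$ with $q^{2^S}\geq 64$, and then iterate for $i=S,S-1,\dots,1$, applying Proposition~\ref{prop:alphred} to descend from $\bF_{q^{2^i}}$ down to $\bF_{q^{2^{i-1}}}$ via a 4-multiplication-friendly $[5,\,2]_{q^{2^{i-1}}}$ classical code lifted through Lemma~\ref{lem:ctoqMF}. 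Here the constraint $3(r-1)<\MF{k}=2$ forces $r=1$, so the dimension $K_0$ is preserved while the block length grows by the constant factor $5$ per step. Both Corollary~\ref{cor:transAG} and Lemma~\ref{lem:mfclass} yield single codes, so the same-code property is preserved by Proposition~\ref{prop:alphred} and propagates through all $S$ iterations, giving a $[[\Theta(N),\Theta(N),\Theta(N)]]_q$ code supporting transversal $CCZ_q$ (and hence $U_q$).

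The hard part will be balancing $\MF{k}$ and $r$ in Proposition~\ref{prop:alphred}: one wants $\MF{k}$ large to permit large $r$ (hence to boost the dimension in a single shot), yet the single-code version of Lemma~\ref{lem:mfclass} restricts $\MF{k}\leq q/4+1$, which is too small for a one-shot reduction when $q$ is small. The resolution is iterative alphabet reduction with $\MF{k}=2$ across the constant number $S=O_q(1)$ of stages; the remaining bookkeeping is to check that over $O(1)$ stages the multiplicative constants $\MF{n}^S$ in the block length, together with the trivial distance factor $1$ contributed by each concatenation layer, leave the relative dimension and distance at $\Theta(1)$.
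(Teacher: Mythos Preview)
Your proposal is correct and follows essentially the same strategy as the paper: start from the AG-based codes of Corollary~\ref{cor:transAG} over a large constant alphabet and reduce to $\bF_q$ via Proposition~\ref{prop:alphred}, using Reed--Muller--based multiplication-friendly codes (Lemma~\ref{lem:mfRM}) when a mere triple suffices and the single-code Reed--Solomon--based construction (Lemma~\ref{lem:mfclass}) with $\MF{k}=2$ iteratively when $q\geq 5$ so that $Q^{(1)}=Q^{(2)}=Q^{(3)}$.

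The only organizational difference is that the paper uses the iterative $\MF{k}\in\{2,4\}$ scheme uniformly for all $q$ (branching at each step on whether the current alphabet size is $<5$ or $\geq 5$), whereas you do a single-shot reduction via Lemma~\ref{lem:mfRM} with $\MF{k}=K$ in the general case; your variant is a mild simplification and equally valid. One small point to patch: in your $q\geq 5$ branch, taking the \emph{least} $S$ with $q^{2^S}\geq 64$ can give $S=0$ when $q\geq 64$ is a non-square prime power, which would leave $q^K=q$ non-square and so outside the scope of Corollary~\ref{cor:transAG}; simply require $S\geq 1$ (or otherwise ensure $K$ is even), which your own opening sentence already anticipates.
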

\begin{proof}
  If $q$ is a square and $q\geq 64$, then Corollary~\ref{cor:transAG} gives a family of codes with the desired properties. Thus assume that $q<64$ or that $q$ is not a square.
  
  To begin, we define a sequence of 4-multiplication-friendly quadruples $((\MF{Q}_t^{(h)})_{h\in[4]})_{t\in\bN}$ indexed by $t\in\bN$. Let $q_1=q$. Then given $q_t$ for $t\in\bN$, we define $(\MF{Q}_t^{(h)})_{h\in[4]}$ as follows:
  \begin{enumerate}
  \item\label{it:agqtl5} If $q_t<5$, let $(\MF{Q}_t^{(1)},\MF{Q}_t^{(2)},\MF{Q}_t^{(3)},\MF{Q}_t^{(4)})$ be the quadruple of $[[\MF{n}_t,\MF{k}_t]]_{q_t}$ codes from Lemma~\ref{lem:mfRM} with $m=4$ and $\MF{k}_t=4$. Thus each $\MF{Q}_t^{(h)}$ for $h\in[4]$ has parameters
    \begin{equation*}
      [[\MF{n}_t=q_t^{12},\; \MF{k}_t=4,\; \MF{d}_t\geq 1]]_{q_t}.
    \end{equation*}
  \item If $q_t\geq 5$, let $\MF{Q}_t^{(1)}=\MF{Q}_t^{(2)}=\MF{Q}_t^{(3)}=\MF{Q}_t^{(4)}$ be the $[[\MF{n}_t,\MF{k}_t]]_{q_t}$ code from Lemma~\ref{lem:mfclass} with $m=4$, $\MF{n}_t=q_t$, and $\MF{k}_t=2$ (which is valid as then $m(\MF{k}_t-1)=4<q_t$). Thus $\MF{Q}_t^{(h)}$ has parameters
    \begin{equation*}
      [[\MF{n}_t=q_t,\; \MF{k}_t=2,\; \MF{d}_t\geq 1]]_{q_t}.
    \end{equation*}
  \end{enumerate}
  To complete the inductive step defining $(\MF{Q}_t^{(h)})_{h\in[4]}$, we let $q_{t+1}=q_t^{\MF{k}_t}$. By construction we always have $\MF{k}_t\geq 2$, so $(q_t)_{t\in\bN}$ forms an infinite increasing sequence of powers of $q$.

  Now fix the least $T\in\bN$ such that $q_T\geq 64$. Let $\LAT{Q}$ be an arbitrary CSS code over $\bF_{q_T}$ from the family in Corollary~\ref{cor:transAG}, so that $\LAT{Q}$ has parameters $[[\LAT{n},\; \LAT{k}\geq\LAT{n}/100,\; \LAT{d}\geq\LAT{n}/100]]_{q_T}$ for some $\LAT{n}\in\bN$.

  Given such a choice of $\LAT{Q}$, we then define our desired codes $(Q^{(h)})_{h\in[3]}$ by
  \begin{equation*}
    Q^{(h)} := \MF{Q}_1^{(h)}\diamond_1\cdots\diamond_1\MF{Q}_{T-2}^{(h)}\diamond_1\MF{Q}_{T-1}^{(h)}\diamond_1\LAT{Q},
  \end{equation*}
  where $\diamond_1$ denotes the concatenation with restriction procedure given by setting $r=1$ in Proposition~\ref{prop:alphred}, and where we process the $\diamond_1$'s from right to left (so the innermost parentheses go around $\MF{Q}_{T-1}^{(h)}\diamond_1\LAT{Q}$). Note that when $q=q_1\geq 5$, we never need case~\ref{it:agqtl5} above in the definition of $(\MF{Q}_t^{(h)})_{h\in[4]}$, so we will have $Q^{(1)}=Q^{(2)}=Q^{(3)}$.

  By Proposition~\ref{prop:alphred}, the resulting triple $(Q^{(h)})_{h\in[3]}$ of codes supports a transversal $CCZ_q$ gate, and has parameters $[[n,k,d]]_q$ with
  \begin{align*}
    n &= \MF{n}_1\cdots\MF{n}_{T-1}\LAT{n} \leq q_1^{12}q_2\cdots q_{T-1}\LAT{n} \\
    k &= \LAT{k} \geq \LAT{n}/100 \\
    d &\geq \MF{d}_1\cdots\MF{d}_{T-1}\LAT{d} \geq \LAT{n}/100.
  \end{align*}
  Because $T$ is a constant depending only on $q$ as $\LAT{n}\rightarrow\infty$, it follows that $n,k,d=\Theta(\LAT{n})$. Thus these triples $(Q^{(h)})_{h\in[3]}$ (where we have one such triple for every $\LAT{Q}$ from Corollary~\ref{cor:transAG}) form an infinite family of CSS codes supporting a transversal $CCZ_q$ gate with paramters $[[n,\Theta(n),\Theta(n)]]_q$ for $n\rightarrow\infty$ as $\LAT{n}\rightarrow\infty$.
\end{proof}

We remark that in the proof of Theorem~\ref{thm:constalph} above, we made no attempt to optimize constant factors, but for practical applications such optimization could be important.

\section{Acknowledgments}
We thank Christopher A.~Pattison for numerous helpful discussions.

\bibliographystyle{alpha}
\bibliography{library}

\appendix

\section{Quantum Multiplication-Friendly Codes with Good Distance}
\label{sec:qmultfriend}
In this section, we present quantum multiplication-friendly codes with distance growing linearly with the block length. The construction uses similar methods as in Section~\ref{sec:latrans}. Recall that in contrast, the quantum multiplication-friendly codes presented in Section~\ref{sec:multfriend} were classical codes and thus only had distance $1$ as quantum codes. Note that to achieve linear distance, the construction in this section does impose more restrictions on the dimension and field size compared to Section~\ref{sec:multfriend}.

\begin{lemma}
  \label{lem:mfquant}
  For every prime power $q$ and every $k,r,\ell,m\in\bN$ such that $k\leq r\leq\ell\leq q/2$, $m(k-1)<r$, and $m(\ell-1)<n:=q-r$, there exists an $m$-multiplication-friendly $[[n,\; k,\; \geq\ell+1-r]]_q$ CSS code.
\end{lemma}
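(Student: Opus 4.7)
\medskip

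\noindent\textbf{Proof proposal.} My plan is to combine the CSS construction from Theorem~\ref{thm:generaltrans} with the Reed-Solomon multiplication trick from Lemma~\ref{lem:mfclass}, using an irreducible degree-$k$ polynomial $\gamma(X)\in\bF_q[X]$ to identify $\bF_{q^k}\cong\bF_q[X]/(\gamma)\cong\bF_q[X]^{<k}$. I would fix $A\subseteq\bF_q$ with $|A|=r$ (in the $k=1$ case, choose $A$ to contain the unique root of $\gamma$), let $B=A^c$ so $|B|=n=q-r$, and define
\[
  Q_Z = \evl_B(\bF_q[X]^{<\ell}), \quad Q_X^\perp = \evl_B(\gamma\cdot\bF_q[X]^{<\ell-k}),
\]
together with the encoder $\Enc_Z(f)=\evl_B(f)+Q_X^\perp$ for $f\in\bF_q[X]^{<k}\cong\bF_{q^k}$. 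The hypotheses $\ell,r\leq q/2$ give $\ell\leq n$, so evaluation on $B$ is injective on $\bF_q[X]^{<\ell}$; this yields $Q_X^\perp\subseteq Q_Z$, with $\dim Q_Z=\ell$, $\dim Q_X^\perp=\ell-k$, and code dimension $k$. The decoder $\Dec:\bF_q^B\to\bF_{q^k}$ would Lagrange-interpolate to the unique $P\in\bF_q[X]^{<n}$ with $\evl_B(P)=z$ and output $P\bmod\gamma$.

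Multiplication-friendliness should follow from the parameters: if $z_i'=\evl_B(F_i)$ is any coset representative of $\Enc_Z(z_i)$ with $F_i\in\bF_q[X]^{<\ell}$ and $F_i\equiv z_i\pmod\gamma$, then the componentwise product equals $\evl_B(F_1\cdots F_m)$, and $\deg(F_1\cdots F_m)\leq m(\ell-1)<n$ by hypothesis, so the interpolator exactly recovers $F_1\cdots F_m$; reducing modulo $\gamma$ yields $z_1 z_2\cdots z_m$ in $\bF_{q^k}$. For the $Z$-distance, any representative of a nonzero coset is a nonzero polynomial $F\in\bF_q[X]^{<\ell}$ with $F\not\equiv 0\pmod\gamma$; such $F$ has at most $\ell-1$ roots in $B$, so the evaluation weight is at least $n-\ell+1=q-r-\ell+1\geq\ell+1-r$ (using $q\geq 2\ell$).

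The main obstacle is the $X$-distance analysis, where I would invoke the shortening/puncturing duality for Reed-Solomon codes: $(\evl_B(\bF_q[X]^{<s}))^\perp=\evl_B(\tau_A\cdot\bF_q[X]^{<q-s-r})$, where $\tau_A(X)=\prod_{a\in A}(X-a)$. Using this, vectors $v\in Q_X=(Q_X^\perp)^\perp$ should be parameterized as $v_j=\tau_A(j)g(j)/\gamma(j)$ for $g\in\bF_q[X]^{<n-\ell+k}$, and one checks that $v\in Q_Z^\perp$ iff $g\in\gamma\cdot\bF_q[X]^{<n-\ell}$. Since $\tau_A(j),\gamma(j)\neq 0$ for all $j\in B$, the weight of such $v$ equals $|\{j\in B:g(j)\neq 0\}|$, and for nonzero $g$ of degree $<n-\ell+k$ this is at least $n-(n-\ell+k-1)=\ell-k+1\geq\ell+1-r$ by $k\leq r$. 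Combining the two bounds yields the claimed distance $\geq\ell+1-r$. The delicate bookkeeping here is the division by $\gamma(j)$, which forces the $k=1$ case to absorb the root of $\gamma$ into $A$ so that $\gamma$ never vanishes on $B$.
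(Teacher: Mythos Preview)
Your argument is correct, but it is genuinely different from the paper's. The paper does not build the dimension-$k$ code directly; instead it applies Theorem~\ref{thm:generaltrans} to the Reed--Solomon code $C=\evl_{\bF_q}(\bF_q[X]^{<\ell})$ to get a $[[q-r,r,\geq\ell+1-r]]_q$ code with $Q_X^\perp=\evl_B(\tau_A\cdot\bF_q[X]^{<\ell-r})$ (multiples of $\tau_A$, not of $\gamma$), and then invokes the restriction operation of Definition~\ref{def:restriction} to cut the message space down to $\evl_A(\bF_q[X]^{<k})\cong\bF_{q^k}$. Its decoder correspondingly uses two interpolation steps: first on $B$ to recover the degree-$<n$ product, then on $A$ to recover the degree-$<r$ polynomial (this second step is where the hypothesis $m(k-1)<r$ enters), and finally reduces modulo~$\gamma$. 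Your construction instead bakes $\gamma$ directly into the stabilizer group, taking $Q_X^\perp=\evl_B(\gamma\cdot\bF_q[X]^{<\ell-k})$, which lets you decode with a single interpolation on $B$ followed by reduction modulo~$\gamma$ and in fact never uses $m(k-1)<r$ at all. The trade-off is that the paper's approach is more modular (it recycles the distance bound from Theorem~\ref{thm:generaltrans} and the restriction machinery wholesale, and never has to compute $Q_X$ explicitly), whereas yours is more self-contained but requires the hands-on twisted-duality computation for the $X$-distance, including the special handling of the root of $\gamma$ when $k=1$.
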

\begin{proof}
  Let $C=\evl_{\bF_q}(\bF_q[X]^{<\ell})\subseteq\bF_q^{\bF_q}$ be the classical $[q,\ell,q-\ell+1]_q$ Reed-Solomon code (see Definition~\ref{def:polynomials}). Fix some set $A\subseteq\bF_q$ of size $|A|=r$, and let $Q=\CSS(Q_X=C^\perp|_{A^c},\; Q_Z=C|_{A^c};\; \Enc_Z)$ be the CSS code with $Z$ encoding function $\Enc_Z:\bF_q[X]^A\rightarrow Q_Z/Q_X^\perp$ given by
  \begin{equation*}
    \Enc_Z(z) = \{c|_{A^c}:c\in C,\;c|_A=z\}.
  \end{equation*}
  As shown in the proof of Theorem~\ref{thm:generaltrans}, $Q$ is a well-defined $[[q-r,r,\geq\min\{d,d^\perp\}-r]]_q$ CSS code with $Q_X^\perp = (C\cap(\{0\}^A\times\bF_q^{A^c}))|_{A^c} \subseteq Q_Z$, where $d^\perp$ denotes the distance of $C^\perp$. Because $C^\perp=\evl(\bF_q[X]^{<q-\ell}$, we have $d^\perp=\ell+1$, so the assumption that $\ell\leq q/2$ ensures that $Q$ has distance at least $\min\{d,d^\perp\}-r=\ell+1-r$.

  Now fix an irreducible degree-$k$ polynomial $\gamma(X)\in\bF_q[X]$, so that $\bF_{q^k}=\bF_q[X]/(\gamma(X))\cong\bF_q[X]^{<k}$. We then define $Q'=Q|_{\bF_{q^k}}$ to be the restricted code (see Definition~\ref{def:restriction}) under the isomorphism
  \begin{equation*}
    \bF_{q^k}=\bF_q[X]/(\gamma(X))\cong\bF_q[X]^{<k}\cong\evl_A(\bF_q[X]^{<k}),
  \end{equation*}
  so that $Q'$ has $Z$ encoding function
  \begin{equation*}
    \Enc_Z' = \Enc_Z|_{\bF_{q^k}} : \bF_{q^k}=\bF_q[X]/(\gamma(X))\cong\bF_q[X]^{<k}\xrightarrow{\evl_{A}}\bF_q^A\xrightarrow{\Enc_Z}Q_Z/Q_X^\perp.
  \end{equation*}
  It follows by definition that $Q'$ is a $[[n,\; k,\; \geq\ell+1-r]]_q$ CSS code. We will now show that $Q'$ is $m$-multiplication-friendly.

  For this purpose, we first define the desired decoding function $\Dec':\bF_q^{A^c}\rightarrow\bF_{q^k}$ as follows. First, given an input $z'\in\bF_q^{A^c}$, let $f(X)\in\bF_q[X]^{<n}$ be the unique polynomial of degree $<n$ that agrees with $z'$ at every point in $A^c$. Then let $g(X)\in\bF_q[X]^{<r}$ be the unique polynomial of degree $<r$ that agrees with $f$ at every point in $A$, and define $\Dec'(z')=g(X)\mod\gamma(X)$, where we again use the equivalence $\bF_q[X]/(\gamma(X))=\bF_{q^k}$.

  To show that $\Dec'$ behaves as desired, we must show that~(\ref{eq:mfdec}) holds for $\Dec'$ for every choice of $z_1,\dots,z_m\in\bF_{q^k}=\bF_q[X]/(\gamma(X))\cong\bF_q[X]^{<k}$ and every $z_1'\in\Enc_Z'(z_1),\dots,z_m'\in\Enc_Z'(z_m)$. For this purpose, for each $h\in[m]$, let $f_h(X)\in\bF_q[X]^{<\ell}$ be the unique polynomial of degree $<\ell$ that agrees with $z_h'$ at all points in $A^c$. Then because $m(\ell-1)<n=|A^c|$, when $\Dec'$ is given input $z':=z_1'*z_2'*\cdots*z_m'$, the polynomial $f(X)\in\bF_q[X]^{<n}$ it interpolates from $z'$ must be the product $f(X)=f_1(X)f_2(X)\cdots f_m(X)$. It follows by the definition of $\Enc_Z'$ that $\evl_A(f_h)=\evl_A(z_h)$ for each $h\in[m]$, and therefore $\evl_A(f)=\evl_A(z_1z_2\cdots z_m)$, where we interpret each $z_h$ as a polynomial in $\bF_q[X]^{<k}$. Thus because $m(k-1)<r$, it follows that the polynomial $g(X)\in\bF_q[X]^{<r}$ that $\Dec'$ interpolates from $\evl_A(f)$ must equal $z_1z_2\cdots z_m$. Therefore $\Dec'(z')$ outputs $g(X)\mod \gamma(X)=z_1z_2\cdots z_m\mod\gamma(X)$, which by definition equals the product $z_1z_2\cdots z_m$ over $\bF_{q^k}=\bF_q[X]/(\gamma(X))$. Therefore~(\ref{eq:mfdec}) holds for $\Dec'$, as desired.
\end{proof}

\section{Nearly Good Construction via Iterative Alphabet Reduction}
\label{sec:RSconcat}
In this section, we obtain infinite families of CSS codes supporting transversal $CCZ_q$ and $U_q$ gates over arbitrary constant alphabet sizes $q$, such that the code dimension and distance grow very nearly linearly in the block length $n$. Formally, here we obtain dimension and distance at least $n/2^{O(\log^*n)}$, where $\log^*n$ is the very slow-growing function given by the number of times one must iteratively apply a logarithm starting from $n$ to obtain a value $\leq 1$.

Thus the construction in this section obtains slightly worse parameters than the asymptotically good codes of Theorem~\ref{thm:constalph}. However, here we simply use concatenation of codes based on Reed-Solomon (and for small alphabets, Reed-Muller) codes, which are more elementary compared to the AG codes used to prove Theorem~\ref{thm:constalph}.

Specifically, the main idea in the result below is to begin with the quantum Reed-Solomon codes of \cite{krishna_towards_2019} described in Corollary~\ref{cor:ktconstruct}, and then reduce the alphabet size to a constant via iterative applications of Proposition~\ref{prop:alphred}, using the multiplication-friendly codes described in Section~\ref{sec:multfriend} and Appendix~\ref{sec:qmultfriend}.

\begin{theorem}
  \label{thm:RSconstalph}
  For every fixed prime power $q$, there exists an infinite family $(Q_t^{(1)},Q_t^{(2)},Q_t^{(3)})_{t\in\bN}$ of quantum CSS codes supporting a transversal $CCZ_q$ gate with parameters\footnote{As we treat $q$ as a fixed constant, the hidden constants in $O,\Omega$ here may depend on $q$.}
  \begin{equation*}
    \left[\left[n_t,\; k_t\geq\Omega\left(\frac{n_t}{2^{O(\log^*n_t)}}\right),\; d_t\geq\Omega\left(\frac{n_t}{2^{O(\log^*n_t)}}\right)\right]\right]_q
  \end{equation*}
  such that $n_t\rightarrow\infty$ as $t\rightarrow\infty$. Furthermore, if $q\geq 5$, then we can ensure $Q_t^{(1)}=Q_t^{(2)}=Q_t^{(3)}$, and therefore these codes also support a transversal $U_q$ gate.
\end{theorem}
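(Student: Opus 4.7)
The plan is to mimic the proof of Theorem~\ref{thm:constalph}, but with two changes: the AG-based outer code is replaced by the Reed--Solomon-based construction of Corollary~\ref{cor:ktconstruct}, which lives over a growing alphabet $q_0 \sim n$ (rounded up to a power of $q$) rather than over a fixed constant-sized one; and the constant number of alphabet-reduction rounds is replaced by $T = O(\log^* n)$ rounds. Because $q_0$ grows with $n$, one cannot collapse down to $q$ with $O(1)$ concatenations and each round must contribute only a constant-factor loss to rate and relative distance. This forces the use of the linear-distance quantum multiplication-friendly codes of Lemma~\ref{lem:mfquant} for most rounds, falling back on the trivial-distance classical constructions of Section~\ref{sec:multfriend} only for a final $O(1)$ rounds once the alphabet has dropped to a constant.

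Concretely, starting from an $[[\LAT{n},\LAT{k},\LAT{d}]]_{q_0}$ Reed--Solomon based code $\LAT{Q}$ from Corollary~\ref{cor:ktconstruct} with $\LAT{n},\LAT{k},\LAT{d} = \Theta(q_0)$, build a descending chain of powers of $q$, $q_0 > q_1 > \cdots > q_T$, with $q_s = q_{s+1}^{\MF{k}_s}$ and $\MF{k}_s$ equal to a small constant fraction of $q_{s+1}$. Since $\log q_s = \MF{k}_s \log q_{s+1} = \Theta(q_{s+1}\log q_{s+1})$, this gives $q_{s+1} \sim (\log q_s)/\log\log q_s$, so $q_s$ drops to $O(1)$ in $T = O(\log^* n)$ rounds. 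For each $s<T$, Lemma~\ref{lem:mfquant} with $m=4$, $r' \approx 4\MF{k}_s$, and $\ell \approx q_{s+1}/4$ yields a $4$-multiplication-friendly $[[\MF{n}_s=\Theta(q_{s+1}),\MF{k}_s,\MF{d}_s=\Theta(\MF{n}_s)]]_{q_{s+1}}$ quantum CSS code. Feeding this and the current outer code into Proposition~\ref{prop:alphred} with its parameter $r = \lfloor(\MF{k}_s+2)/3\rfloor = \Theta(\MF{k}_s)$ reduces the alphabet from $q_s$ to $q_{s+1}$, multiplying length by $\MF{n}_s$, dimension by $r$, and distance by $\MF{d}_s$; both $r/\MF{n}_s$ and $\MF{d}_s/\MF{n}_s$ are bounded below by absolute constants. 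Once $q_T$ is a small constant, finish with $O(1)$ further rounds of Proposition~\ref{prop:alphred} using the classical multiplication-friendly codes of Lemma~\ref{lem:mfRM} (or Lemma~\ref{lem:mfclass} for $q\geq 5$) lifted to quantum codes via Lemma~\ref{lem:ctoqMF}, taking $r=1$ exactly as in the proof of Theorem~\ref{thm:constalph}. These inner codes have distance $1$, but Proposition~\ref{prop:alphred} still gives distance $\geq 1 \cdot d = d$, so the distance is preserved; the length grows by only an $O(1)$ factor.

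Compounding the constant per-round losses across $T = O(\log^* n)$ rounds yields $k, d \geq \Omega(n / 2^{O(\log^* n)})$. The transversal $U_q$ claim for $q\geq 5$ follows because then $q_s \geq q \geq 5$ throughout, so both Lemma~\ref{lem:mfquant} and Lemma~\ref{lem:mfclass} can be instantiated with a single CSS code shared across all four MF slots, forcing $Q_t^{(1)} = Q_t^{(2)} = Q_t^{(3)}$, whereupon Lemma~\ref{lem:CCZtoU} gives the $U_q$ gate. The main technical nuisance is coordinating the integrality constraints: simultaneously satisfying the inequalities of Lemma~\ref{lem:mfquant} ($\MF{k}_s \leq r' \leq \ell \leq q_{s+1}/2$, $4(\MF{k}_s-1) < r'$, $4(\ell-1) < q_{s+1}-r'$) and of Proposition~\ref{prop:alphred} ($3(r-1) < \MF{k}_s$, $r \leq q_{s+1}$) while forcing $q_s$ to be the exact integer power $q_{s+1}^{\MF{k}_s}$; taking $\MF{k}_s$ a small enough fraction of $q_{s+1}$ and rounding $q_{s+1}$ up to the nearest power of $q$ at each step handles this with only constant overhead per round, absorbed into the $2^{O(\log^* n)}$ bound.
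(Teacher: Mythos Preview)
Your proposal is essentially the same argument as the paper's: start from the Reed--Solomon outer code of Corollary~\ref{cor:ktconstruct}, iterate Proposition~\ref{prop:alphred} with $4$-multiplication-friendly inner codes, use Lemma~\ref{lem:mfquant} (with $r=\Theta(\MF{k})$ in Proposition~\ref{prop:alphred}) for all but $O(1)$ rounds so that each round costs only a constant factor in rate and relative distance, and fall back on the classical constructions of Section~\ref{sec:multfriend} for the last few rounds; the $O(\log^*n)$ count and the $q\geq 5$ argument match as well.

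The one point where your write-up diverges from the paper, and where it is slightly shaky, is the direction in which you build the alphabet chain. You fix $q_0$ first and try to descend, which forces the divisibility constraint $\log_q q_{s+1}\mid\log_q q_s$ at every step; your remedy of ``rounding $q_{s+1}$ up to the nearest power of $q$'' does not by itself guarantee $q_s=q_{s+1}^{\MF{k}_s}$ exactly (and if, say, $\log_q q_0$ is prime, no nontrivial intermediate power of $q$ exists at all). The paper sidesteps this entirely by building the chain bottom-up: set $q_1=q$, then $q_{t+1}=q_t^{\MF{k}_t}$, so every $q_t$ is automatically a power of $q$ with no rounding; the family is then indexed by the depth $t$, with the outer Reed--Solomon code chosen over $\bF_{q_t}$. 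This is a cosmetic reorganization of your plan, not a new idea, but it cleanly eliminates the ``technical nuisance'' you flag. (A smaller quibble: your suggested $\ell\approx q_{s+1}/4$ with $r'\approx 4\MF{k}_s$ sits right at the boundary of the inequality $4(\ell-1)<q_{s+1}-r'$ in Lemma~\ref{lem:mfquant}; the paper takes $\MF{k}_t\approx q_t/50$, $r'=5\MF{k}_t$, $\ell=10\MF{k}_t$ to leave comfortable slack.)
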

\begin{proof}
  To begin, we define an infinite sequence of 4-multiplication-friendly quadruples $((\MF{Q}_t^{(h)})_{h\in[4]})_{t\in\bN}$ indexed by $t\in\bN$. Let $q_1=q$. Then given $q_t$ for $t\in\bN$, we define $(\MF{Q}_t^{(h)})_{h\in[4]}$ as follows:
  \begin{enumerate}
  \item\label{it:qtl5} If $q_t<5$, let $(\MF{Q}_t^{(1)},\MF{Q}_t^{(2)},\MF{Q}_t^{(3)},\MF{Q}_t^{(4)})$ be the quadruple of $[[\MF{n}_t,\MF{k}_t]]_{q_t}$ codes from Lemma~\ref{lem:mfRM} with $m=4$ and $\MF{k}_t=3$. Thus each $\MF{Q}_t^{(h)}$ for $h\in[4]$ has parameters
    \begin{equation*}
      [[\MF{n}_t=q_t^8,\; \MF{k}_t=3,\; \MF{d}_t\geq 1]]_{q_t}.
    \end{equation*}
  \item If $5\leq q_t<200$, let $\MF{Q}_t^{(1)}=\MF{Q}_t^{(2)}=\MF{Q}_t^{(3)}=\MF{Q}_t^{(4)}$ be the $[[\MF{n}_t,\MF{k}_t]]_{q_t}$ code from Lemma~\ref{lem:mfclass} with $m=4$, $\MF{n}_t=q_t$, and $\MF{k}_t=\lfloor(q_t-1)/4\rfloor+1$. Thus $\MF{Q}_t^{(h)}$ has parameters
    \begin{equation*}
      [[\MF{n}_t=q_t,\; \MF{k}_t\geq 2,\; \MF{d}_t\geq 1]]_{q_t}.
    \end{equation*}
  \item\label{it:qtg200} If $q_t\geq 200$, let $\MF{Q}_t^{(1)}=\MF{Q}_t^{(2)}=\MF{Q}_t^{(3)}=\MF{Q}_t^{(4)}$ be the $[[\MF{n}_t,\MF{k}_t,\MF{d}_t]]_{q_t}$ code from Lemma~\ref{lem:mfquant} with $m=4$, $\MF{k}_t=\lfloor q_t/50\rfloor$, $\MF{r}_t=5\MF{k}_t$, $\MF{\ell}_t=10\MF{k}_t$, $\MF{n}_t=q_t-\MF{r}_t$, and $\MF{d}_t=\MF{\ell}_t+1-\MF{r}_t$, where $\MF{\ell}_t,\MF{r}_t$ correspond to the parameters in Lemma~\ref{lem:mfquant}. Thus $\MF{Q}_t^{(h)}$ has parameters
    \begin{equation*}
      [[\MF{n}_t\leq q_t,\; \MF{k}_t\in[q_t/100,q_t/50],\; \MF{d}_t\geq q_t/20]]_{q_t}.
    \end{equation*}
  \end{enumerate}
  To complete the inductive step defining $(\MF{Q}_t^{(h)})_{h\in[4]}$, we let $q_{t+1}=q_t^{\MF{k}_t}$. By construction we always have $\MF{k}_t\geq 2$, so $(q_t)_{t\in\bN}$ forms an infinite increasing sequence of powers of $q$.

  For every $t\geq 1$, we also define $r_t:=\lfloor(\MF{k}_t-1)/3\rfloor+1$. Then $3(r_t-1)<\MF{k}_t$, and also by the definition of the $\MF{k}_t$'s we have $r_t\leq q_t$.

  Now for every $t\geq 2$, by construction $q_t\geq 8$, so let $\LAT{k}_t=\lfloor q_t/8\rfloor$, $\LAT{\ell}_t=2\LAT{k}_t$, and $\LAT{n}_t=q_t-\LAT{k}_t$. Then let $\LAT{Q}_t$ be the $[[\LAT{n}_t,\LAT{k}_t,\LAT{d}_t]]_{q_t}$ code obtained by instantiating Corollary~\ref{cor:ktconstruct} with parameters $q_t,\LAT{k}_t,\LAT{\ell}_t$. Thus $\LAT{Q}_t$ has parameters
  \begin{equation*}
    [[\LAT{n}_t\leq q_t,\; \LAT{k}_t\in[q_t/16,q_t/8],\; \LAT{d}_t\geq q_t/8]]\\_{q_t}.
  \end{equation*}

  For $t\geq 2$, we then define our desired codes $(Q_t^{(h)})_{h\in[3]}$ by
  \begin{equation*}
    Q_t^{(h)} := \MF{Q}_1^{(h)}\diamond_{r_1}\cdots\diamond_{r_{t-2}}\MF{Q}_{t-2}^{(h)}\diamond_{r_{t-1}}\MF{Q}_{t-1}^{(h)}\diamond\LAT{Q}_t,
  \end{equation*}
  where $\diamond_r$ denotes the concatenation with restriction procedure given in Proposition~\ref{prop:alphred} with paramter $r$, and where we process the $\diamond_r$'s from right to left (so the innermost parentheses go around $\MF{Q}_{t-1}^{(h)}\diamond_{r_{t-1}}\LAT{Q}_t$). Note that when $q=q_1\geq 5$, we never need case~\ref{it:qtl5} above in the definition of $(\MF{Q}_u^{(h)})_{h\in[4]}$, so we will have $Q_t^{(1)}=Q_t^{(2)}=Q_t^{(3)}$.

  By Proposition~\ref{prop:alphred}, the resulting triple $(Q_t^{(h)})_{h\in[3]}$ of codes supports a transversal $CCZ_q$ gate, and has parameters $[[n_t,k_t,d_t]]_q$ with
  \begin{align*}
    n_t &= \MF{n}_1\cdots\MF{n}_{t-1}\LAT{n}_t \\
        &\leq q_1^8q_2\cdots q_t \\
    k_t &\geq r_1\cdots r_{t-1}\cdot\LAT{k}_t \\
        &\geq \frac{q_4}{400}\cdots\frac{q_{t-1}}{400}\cdot\frac{q_t}{16} \\
    d_t &\geq \MF{d}_1\cdots\MF{d}_{t-1}\LAT{d}_t \\
        &\geq \frac{q_4}{20}\cdots\frac{q_{t-1}}{20}\cdot\frac{q_t}{8},
  \end{align*}
  where the inequalities for $k_t,d_t$ above hold because by definition $q_2\geq 8$ and $q_3\geq 64$, so $q_4\geq 64^2$. Specifically, because $q_4\geq 200$, case~\ref{it:qtg200} above is used to define $(\MF{Q}_u^{(h)})_{h\in[4]}$ for every $u\geq 4$. Furthermore, for $u\geq 4$ it follows that $q_u\geq 4000$ and therefore that $r_u\geq\lfloor(q_u/100-1)/3\rfloor+1\geq q_u/400$.

  Now by definition for every $t>4$, we have $q_{t}\geq q_{t-1}^{q_{t-1}/100}\geq 2^{q_{t-1}/100}$, so
  $q_{t-1}\leq 100\log q_t$.
  Inductively applying this inequality implies that $t\leq O(\log^*(q_t))$, where recall that we assume $q=q_1$ is a fixed constant. Thus because $n_t\leq O(q_1q_2\cdots q_t)$ as shown above, it follows that
  \begin{align*}
    k_t &\geq \Omega\left(\frac{n_t}{400^t}\right) \geq \Omega\left(\frac{n_t}{2^{O(\log^*n_t)}}\right) \\
    d_t &\geq \Omega\left(\frac{n_t}{20^t}\right) \geq \Omega\left(\frac{n_t}{2^{O(\log^* n_t)}}\right).
  \end{align*}
\end{proof}

We remark that in the proof of Theorem~\ref{thm:RSconstalph} above, we made no attempt to optimize constant factors, but for practical applications such optimization could be important.

\end{document}